
\documentclass[12pt, reqno]{amsart}

\usepackage{amssymb}
\usepackage{mathpazo}
\usepackage{amsmath}%

\usepackage[latin1]{inputenc}
\usepackage{graphicx}
\usepackage[colorlinks=true, breaklinks=true]{hyperref}

\usepackage{amsfonts}

\setcounter{MaxMatrixCols}{10}

\newtheorem{theorem}{Theorem}

\newtheorem{lemma}[theorem]{Lemma}

\newtheorem{proposition}[theorem]{Proposition}
\newtheorem{remark}[theorem]{Remark}

\newcommand{\bea}{\begin{eqnarray}}
\newcommand{\eq}{\end{eqnarray}}
\newcommand{\eea}{\end{eqnarray}}
\newcommand{\bqn}{\begin{eqnarray*}}
\newcommand{\beaa}{\begin{eqnarray*}}
\newcommand{\eqn}{\end{eqnarray*}}
\newcommand{\eeaa}{\end{eqnarray*}}
\newcommand{\bpr}{\begin{proposition}}
\newcommand{\epr}{\end{proposition}}

\textheight8.75in \textwidth6.5in \oddsidemargin0cm \evensidemargin0cm \topmargin-.25in

\begin{document}
\title{Extreme-strike asymptotics for general Gaussian stochastic volatility
models}
\author{Archil Gulisashvili}
\address{Department of Mathematics, Ohio University, Athens OH 45701, 
\texttt{gulisash@ohio.edu}}
\author{Frederi Viens}
\address{Department of Statistics and Probability, Michigan State
University, East Lansing, MI 48824, \texttt{viens@msu.edu}}
\author{Xin Zhang}
\address{Department of Mathematics, Purdue University, West Lafayette, IN
47907.}
\date{Jan 28, 2017}

\begin{abstract}
We consider a stochastic volatility asset price model in which the
volatility is the absolute value of a continuous Gaussian process with
arbitrary prescribed mean and covariance. By exhibiting a Karhunen-Lo\`{e}ve
expansion for the integrated variance, and using sharp estimates of the
density of a general second-chaos variable, we derive asymptotics for the
asset price density for large or small values of the variable, and study the
wing behavior of the implied volatility in these models. Our main result
provides explicit expressions for the first five terms in the expansion of
the implied volatility. The expressions for the leading three terms are
simple, and based on three basic spectral-type statistics of the Gaussian
process: the top eigenvalue of its covariance operator, the multiplicity of
this eigenvalue, and the $L^{2}$ norm of the projection of the mean function
on the top eigenspace. The fourth term requires knowledge of all
eigen-elements. We present detailed numerics based on realistic liquidity
assumptions in which classical and long-memory volatility models are
calibrated based on our expansion.%
\end{abstract}

\maketitle

\noindent \textbf{JEL Classification}: G13, C63, C02.\vspace{0.2in}

\noindent \textbf{AMS 2010 Classification}: 60G15, 91G20, 40E05.\vspace{0.2in%
}

\noindent \textbf{Keywords}: stochastic volatility, implied volatility,
large strike, Karhunen-Lo\`{e}ve expansion, chi-squared variates.

\section{Introduction}

In this article, we characterize wing behavior of the implied volatility for
uncorrelated Gaussian stochastic volatility models. This introduction
contains a careful description of the problem's background and of our
motivations. Before going into details, we summarize some of the article's
specificities; all terminology in the next two paragraphs is referenced,
defined, and/or illustrated in the remainder of this introduction.

We hold calibration of volatility smiles as a principal motivator. Cognizant
of the fact that non-centered Gaussian volatility models can be designed in
a flexible and parsimonious fashion, we adopt that class of models, imposing
no further conditions on the marginal distribution of the volatility process
itself, beyond pathwise continuity. The spectral structure of the integrated
variance allows us to work at that level of generality. We find that the
first five terms in the extreme-strike implied volatility asymptotics --
which is typically amply sufficient in applications -- can be determined
explicitly thanks to three parameters characterizing the top of the spectral
decomposition of the integrated variance, with the exception of a factor
appearing in the coefficient of the 4th term in these asymptotics, which
depends on higher-order eigen-elements. In order to prove such a precise
statement while relying on a moderate amount of technicalities, we make use
of the simplifying assumption that the stochastic volatility is independent
of the asset price's driving noise.

When considering the trade-off between this restriction and calibration
considerations, we observe that our model flexibility combined with known
explicit spectral expansions and numerical tools may allow practicioners to
compute the said spectral parameters in a straightforward fashion based on
smile features, while also allowing them to select their favorite Gaussian
volatility model class. Specific examples of Gaussian volatility processes
are non-centered Brownian motion, Brownian bridge, and Ornstein-Uhlenbeck
processes. This last sub-class can be particularly appealing since it
contains stationary volatilities, and includes the well-known Stein-Stein
model. We also mention how any Gaussian model specification, including
long-memory ones, can be handled, thanks to the numerical ability to
determine its spectral elements. We understand that the assumption of the
stochastic volatility model being uncorrelated implies the symmetry of the
implied volatility on either side of the money, which in some applications,
is not a desirable feature. Moreover, while in many option markets,
liquidity considerations limit the ability to calibrate using the
large-strike wing (see the calibration study on SPX options in \cite[Section
5.4]{GJ}), the ability to work with a correlated volatility model is
nonetheless important as soon as one uses the result of the calibration to,
say, price illiquid options such as out-of-the-money calls. Hence a fully
functional general Gaussian model would require a method for estimating the
volatility's correlation with the asset using liquid options data. Such a
study is beyond the scope of our article, since the case of general
correlated Gaussian stochastic volatility models presents additional
mathematical challenges which may require completely new methods and
techniques. We will investigate them separately from this article. An
important step toward a better understanding of the asymptotic behavior of
the implied volatility in some correlated stochastic volatility models is
found in the articles \cite{DFJV1,DFJV2}.

Another problem which is mathematically interesting and important in
practice is the asymptotics for implied volatility in small or large time to
maturity. The techniques developed in the present paper are used in the
subsequent paper \cite{GVZ} to study the small-time asymptotics of
densities, option pricing functions, and the implied volatility in Gaussian
self-similar stochastic volatility models.

\subsection{Background and heuristics}

Studies in quantitative finance based on the Black-Scholes-Merton framework
have shown awareness of the inadequacy of the constant volatility
assumption, particularly after the crash of 1987, when practitioners began
considering that extreme events were more likely than what a log-normal
model will predict. Propositions to exploit this weakness in log-normal
modeling systematically and quantitatively have grown ubiquitous to the
point that implied volatility (IV), or the volatility level that market call
option prices would imply if the Black-Scholes model were underlying, is now
a \emph{bona fide} and vigorous topic of investigation, both at the
theoretical and practical level. The initial evidence against constant
volatility simply came from observing that IV as a function of strike prices
for liquid call options exhibited non-constance, typically illustrated as a
convex curve, often with a minimum near the money as for index options,
hence the term `volatility smile'.

Asset price models where the volatility is a stochastic process are known as
stochastic volatility models; the term `uncorrelated' is added to refer to
the submodel class in which the volatility process is independent of the
noise driving the asset price. In a sense, the existence of the smile for
any uncorrelated stochastic volatility model was first proved mathematically
by Renault and Touzi in \cite{RT}. They established that the IV as a
function of the strike price decreases on the interval where the call is in
the money, increases on the interval where the call is out of the money, and
attains its minimum where the call is at the money. Note that Renault and
Touzi did not prove that the IV is locally convex near the money, but their
work still established stochastic volatility models as a main model class
for studying IV; these models continued steadily to provide inspiration for
IV studies.

A current emphasis, which has become fertile mathematical ground, is on IV
asymptotics, such as large/small-strike, large-maturity, or
small-time-to-maturity behaviors. These are helpful to understand and select
models based on smile shapes. Several techniques are used to derive IV
asymptotics. For instance, by exploiting a method of moments and the
representation of power payoffs as mixtures of a continuum of calls with
varying strikes, in a rather model-free context, R. Lee proved in \cite{Lee}
that, for models with positive moment explosions, the squared IV's large
strike behavior is of order the log-moneyness $\log \left(\frac{K}{%
s_{0}e^{rT}}\right) $ times a constant which depends explicitly on supremum
of the order of finite moments. A similar result holds for models with
negative moment explosions, where the squared IV behaves like $K\mapsto \log
\left(\frac{s_{0}e^{rT}}{K}\right) $ for small values of $K$. More general
formulas describing the asymptotic behavior of the IV in the `wings' ($%
K\rightarrow 0$ or $+\infty $) were obtained in \cite%
{BF1,BF2,BFL,G1,G2,GV,GL} (see also the book \cite{G}).

From the standpoint of modeling, one of the advantages of Lee's original
result is the dependence of IV asymptotics merely on some simple statistics,
namely as we mentioned, in the notation in \cite{Lee}, the maximal order $%
\tilde{p}$ of finite moments for the underlying $S_{T}$, i.e. 
\begin{equation*}
\tilde{p}(T):=\sup \left\{ p\in \mathbb{R}~:~\mathbb{E}\left[ \left(
S_{T}\right) ^{p+1}\right] <\infty \right\} .
\end{equation*}%
This allows the author to draw appropriately strong conclusions about model
calibration. A special class of models in which $\tilde{p}$ is positive and
finite is that of Gaussian volatility models, which we introduce next.

\subsection{Gaussian Stochastic volatility models}

\label{SS:Gsvm}

Let $W$ be a standard Brownian motion on a probability space $(\Omega,%
\mathcal{F},\mathbb{P})$, and let $X$ be a continuous Gaussian process on
the same space that is independent of $W$. We have $X\left( t\right)
=m\left( t\right) +\tilde{X}\left( t\right) $, where $m$ is a continuous
deterministic function on $[0,T]$ (the mean function) and $\tilde{X}$ is a
continuous centered Gaussian process on $[0,T]$ independent of $W$, with
covariance $Q$. Suppose $\{\mathcal{F}_t\}$ is a filtration such that $W$ is
a Brownian motion with respect to $\{\mathcal{F}_t\}$, and the process $X$
is adapted to $\{\mathcal{F}_t\}$.

In the present paper, we study the following asset price model: 
\begin{equation}
dS_{t}=rS_{t}dt+|X_{t}|S_{t}dW_{t}:t\in \lbrack 0,T]  \label{mart}
\end{equation}
on the filtered probability space $(\Omega,\mathcal{F},\{\mathcal{F}_t\},%
\mathbb{P})$, where the filtration $\{\mathcal{F}_t\}$ is such as above. It
is also assumed that the short rate $r$ is constant. The initial condition
for the asset price process will be denoted by $s_{0}$. Note that the
initial condition $X_0$ for the process $X$ may be a nonconstant random
variable.

We will next provide a typical example of a filtration $\{\mathcal{F}_t\}$
satisfying the conditions mentioned above. Let $\mathcal{N}$ be the $\sigma$%
-algebra generated by the events of probability zero, and let $\{\mathcal{F}%
^W_t\}$ and $\{\mathcal{F}^X_t\}$ be the augmentations by the family $%
\mathcal{N}$ of the filtrations generated by the processes $W$ and $X$,
respectively. Consider the filtration $\{\mathcal{F}_t\}$ such that for
every $t\ge 0$, $\mathcal{F}_t=\sigma(\mathcal{F}_t^W,\mathcal{F}^X_t)$.
Then the process $W$ is a Brownian motion with respect to the filtration $\{%
\mathcal{F}_t\}$, and the process $X$ is adapted to $\{\mathcal{F}_t\}$.
Note that if $X_0=const$ a.s., then $\mathcal{F}_0$ is a sub-$\sigma$%
-algebra of $\mathcal{N}$, while if $X_0$ is a random variable, then $%
\mathcal{F}_0=\sigma(X_0;\mathcal{N})$.

Note that it is not supposed in (\ref{mart}) that the process $X$ is a
solution to a stochasic differential equation as is often assumed in
classical stochastic volatility models. A well-known special example of a
Gaussian stochastic volatility model is the Stein-Stein model introduced in 
\cite{SS}, in which the volatility process $X$ is the mean-reverting
Ornstein-Uhlenbeck process satisfying 
\begin{equation}
dX_{t}=\alpha \left( m-X_{t}\right) dt+\beta dZ_{t}  \label{OU}
\end{equation}
where $m$ is the level of mean reversion, $\alpha $ is the mean-reversion
rate, and $\beta $ is level of uncertainty on the volatility; here $Z$ is
another Brownian motion, which may be correlated with $W$. In the present
paper, we adopt an analytic technique, encountered for instance in the
analysis of the uncorrelated Stein-Stein model by this paper's first author
and E.M. Stein in \cite{GS} (see also \cite{G}).

Returning to the question of the value of $\tilde{p}$, for a Gaussian
volatility model, it can sometimes be determined by simple calculations,
which we illustrate here with an elementary example. Assume $S$ is a
geometric Brownian motion with random volatility, i.e. a model as in (\ref%
{mart}) where (abusing notation) $\left\vert X_{t}\right\vert $ is taken the
non-time-dependent $\sigma \left\vert X\right\vert $ where $\sigma $ is a
constant and $X$ is an independent unit-variance normal variate (not
dependent on $t$). Thus, at time $T$, with zero discount rate, $%
S_{T}=s_{0}\exp \left( \sigma \left\vert X\right\vert W_{T}-\sigma
^{2}X^{2}T/2\right) $. To simplify this example to the maximum, also assume
that $X$ is centered; using the independence of $X$ and $W$, we get that we
may replace $\left\vert X\right\vert $ by $X$ in this example, since this
does not change the law of $S_{T}$ (i.e. in the uncorrelated case, $X$'s
non-positivity does not violate standard practice for volatility modeling).
Then, using maturity $T=1$, for any $p>0$, the $p$th moment, via a simple
change of variable, equals%
\begin{equation*}
\mathbb{E}\left[ \left( S_{1}\right) ^{p}\right] =\frac{s_{0}^{p}}{2\pi 
\sqrt{1+p\sigma ^{2}}}\iint_{\mathbf{R}^{2}}dy~dw~\exp \left( -\frac{1}{2}%
\left( y^{2}+w^{2}-2\frac{p\sigma }{\sqrt{1+p\sigma ^{2}}}wy\right) \right)
\end{equation*}%
which by an elementary computation is finite, and equal to $s_{0}^{p}/\sqrt{%
1+p\sigma ^{2}-p^{2}\sigma ^{2}}$, if and only if 
\begin{equation*}
p<\tilde{p}+1=\frac{1}{2}+\sqrt{\frac{1}{4}+\frac{1}{\sigma ^{2}}}.
\end{equation*}%
In the cases where the random volatility model $X$ above is non-centered and
is correlated with $W$, a similar calculation can be performed, at the
essentially trivial expenses of invoking affine changes of variables, and
the linear regression of one normal variate against another.

The above example illustrates heuristically that, by Lee's moment formula,
the computation of $\tilde{p}$ might be the quickest path to obtain the
leading term in the large-strike expansion of the IV, for more complex
Gaussian volatility models, namely ones where the volatility $X$ is
time-dependent. However, computing $\tilde{p}$ is not necessarily an easy
task, and appears, perhaps surprisingly, to have been performed rarely. For
the Stein-Stein model, the value of $\tilde{p}$ can be computed using the
sharp asymptotic formulas for the asset price density near zero and
infinity, established in \cite{GS} for the uncorrelated Stein-Stein model,
and in \cite{DFJV2} for the correlated one. These two papers also provide
asymptotic formulas with error estimates for the IV at extreme strikes in
the Stein-Stein model. Beyond the Stein-Stein model, little was known about
the extreme strike asymptotics of general Gaussian stochastic volatility
models. In the present paper, we extend the above-mentioned results from 
\cite{GS} and \cite{DFJV2} to such models.

\subsection{Motivation and summary of main result}

Adopting the perspective that an asymptotic expansion for the IV can be
helpful for model selection and calibration, our objective is to provide an
expansion for the IV in a Gaussian volatility model relying on a minimal
number of parameters, which can then be chosen to adjust to observed smiles.
The restriction of non-correlated volatility means that the asset price
distribution is a mixture of geometric Brownian motions with time-dependent
volatilities, whose mixing density at time $T$ is that of the square root of
a variable in the second-chaos of a Wiener process. That second-chaos
variable is none other than the integrated variance $\Gamma
_{T}:=\int_{0}^{T}X_{s}^{2}ds. $ By relying on a general Hilbert-space
structure theorem which applies to the second Wiener chaos, we prove that,
for a wide class of non-centered Gaussian stochastic volatility processes
with a possible degeneracy in the eigenstructure of the covariance $Q$ of $X$
viewed as a linear operator on $L^{2}\left( [0,T]\right) $ (i.e. when the
top eigenvalue $\lambda _{1}$ is allowed to have a multiplicity $n_{1}$
larger than $1$), the large-strike IV asymptotics can be expressed with
three terms and an error estimate. These terms depend explicitly on $T$ and
on the following three parameters: $\lambda _{1}$, $n_{1}$, and the ratio $%
\delta =\left\Vert P_{E_{1}}m\right\Vert ^{2}/\lambda _{1}, $ where $%
\left\Vert P_{E_{1}}m\right\Vert $ is the norm in $L^{2}\left( [0,T]\right) $
of the orthogonal projection of the mean function $m$ on the first
eigenspace of $Q$. We also push the expansion to five terms, and notice that
the fifth term also only depends on $\lambda _{1}$, $n_{1}$, and $\delta $,
while the fourth term depends on all other eignevalues and the action of $m$
on all other eigenfunctions. Specifically, with $I\left( K\right) $ the IV
as a function of strike $K$, letting $k:=\log \left( K/s_{0}\right) -rT$ be
the discounted log-moneyness, as $k\rightarrow +\infty $, we prove%
\begin{align}
I\left( K\right) &=M_{1}(T,\lambda _{1})\sqrt{k}+M_{2}\left( T,\lambda
_{1},\delta \right) +M_{3}(T,\lambda _{1},n_{1})\frac{\log k}{\sqrt{k}} 
\notag \\
&\quad+M_{4}(T,\lambda _{1},n_{1},V)\frac{1}{\sqrt{k}}+M_{5}(T,\lambda
_{1},n_{1},\delta )\frac{\log k}{k}+O\left( \frac{1}{\sqrt{k}}\right) ,
\label{asymp}
\end{align}%
where the constants $M_{1}$, $M_{2}$, $M_{3}$, $M_{4}$, and $M_{5}$ depend
explicitly on $T$ and $\lambda _{1}$, $M_{2}$ also depends explicitly on $%
\delta $, while $M_{3}$ also depends explicitly on $n_{1}$, $M_{5}$ depends
explicitly also on both $n_{1}$ and $\delta $, and $M_{4}$ has an additional
rather complex dependence on all the eigen-elements through a factor $V$;
this is all stated in Theorem \ref{T:is} and formula (\ref{E:oo}). A similar
asymptotic formula is obtained in the case where $k\rightarrow -\infty $,
using symmetry properties of uncorrelated stochastic volatility models (see (%
\ref{E:ee})). The specific case of the Stein-Stein model is expanded upon in
some detail. 

\subsection{Practical implications\label{PRACTICAL}}

The first-order constant $M_{1}$ is always strictly positive. The
second-order term (the constant $M_{2}$) vanishes if and only if $m$ is
orthogonal to the first eigenspace of $Q$, which occurs for instance when $%
m\equiv 0$. The third-order and fifth-order terms vanish if and only if the
top eigenvalue has multiplicity $n_{1}=1$, which is typical (the case $%
n_{1}>1$ can be considered degenerate, and does not occur in common
examples). The behavior of $M_{1}$ and $M_{2}$ as functions of $T$ is
determined partly by how the top eigenvalue $\lambda _{1}$ depends on $T$,
which can be non-trivial. In the present paper, we assume $T$ is fixed.

For fixed maturity $T$, assuming that $Q$ has lead multiplicity $n_{1}=1$
for instance, a practitioner will have the possibility of determining a
value $\lambda _{1}$ and a value $\delta $ to match the specific
root-log-moneyness behavior of small- or large-strike IV; moreover in that
case, choosing a constant mean function $m$, one obtains $\delta
=m^{2}\lambda _{1}^{-1}\left\vert \int_{0}^{T}e_{1}\left( t\right)
dt\right\vert ^{2}$ where $e_{1}$ is the top eigenfunction of $Q$. Market
prices may not be sufficiently liquid at extreme strikes to distinguish
between more than two parameters; this is typical of calibration techniques
for implied volatility curves for fixed maturity, such as the `stochastic
volatility inspired' (SVI) parametrization disseminated by J. Gatheral: see 
\cite{Gat,Ga} (see also \cite{GJ} and the references therein). Our result
shows that Gaussian volatility models with non-zero mean are sufficient for
this flexibility, and provide equivalent asymptotics irrespective of the
precise mean function and covariance eigenstructure, since modulo the
disappearance of the third-order term in the unit top multiplicity case $%
n_{1}=1$, only $\lambda _{1}$ and $\delta $ are relevant. The fourth-order
term in our expansion can provide additional precision in calibration. Its
use is illustrated in Section \ref{NUM}.

Modelers wishing to stick to well-known classes of processes for $X$ may
then adjust the value of $\lambda _{1}$ by exploiting any available
invariance properties for the desired class$.$ For example, if $X$ is
standard Brownian motion, or the Brownian bridge, on $[0,T]$, we have $%
\lambda _{1}=4T^{2}/\pi $ or $\lambda _{1}=T^{2}/\pi $ respectively, and
these values scale quadratically with respect to a multiplicative scaling
constant for $X$, beyond which an arbitrary mean value $m$ may be chosen. If 
$X$ is the mean-zero stationary OU process, we have $\lambda _{1}=\beta
^{2}/\left( \omega _{T}+\alpha ^{2}\right) $ where $\omega _{T}$ is the
smallest positive solution of $2\alpha \omega \cos \left( \omega T\right)
+\left( \alpha ^{2}-\omega \right) \sin \left( \omega T\right) =0$, in which
case, for a fixed arbitrarily selected rate of mean reversion $\alpha $, a
scaling of $\lambda _{1}$ is then equivalent to selecting the variance of $X$%
, while a constant mean value $m$ can then be selected independently. \cite[%
Chapter 1]{C} can be consulted for the eigenstructure of the covariance of
Brownian motion and the Brownian bridge, which are classical results, and
for a proof of the eigenstructure of the OU covariance (see also \cite{CP}).
The top eigenfunctions in all three of these cases are known explicit
trigonometric functions (see \cite[Chapter 1]{C}), and need to be referenced
when selecting $m.$ For the OU bridge, the eigenstructure of $Q$
(equivalently known as the Karhunen-Lo\`{e}ve expansion of $Q$) was found in 
\cite{Co}, while in \cite{DM}, such an expansion was characterized for
special Gaussian processes generated by independent pairs of exponential
random variables. On the other hand, fractional Brownian motion and OU
processes driven by fractional Brownian motion (also known as fOU processes)
do not fall in the class of Gaussian processes for which the Karhunen-Lo\`{e}%
ve expansion is known explicitly.

However, efficient numerical techniques allowing to compute the
eigenfunctions and eigenvalues in these cases were developed by S. Corlay
(see Chapter 2 in \cite{C}). Corlay uses the trapezoidal Nystr\"{o}m method
and the three-step Richardson-Romberg method to approximate the five highest
Karhunen-Lo\`{e}ve eigenvalues of various Gaussian processes; in principle,
eigenvalues and eigenfunctions of arbitrarily high order can be obtained
using his method. He starts with such estimates for Brownian motion,
Brownian bridge, and Ornstein-Uhlenbeck process, for which explicit
expressions for the eigenvalues are known. The resulting approximations are
very close to the values obtained from the explicit formulas for the
eigenvalues, which shows that the method used by Corlay is rather powerful.
Corlay also estimates the five highest Karhunen-Lo\`{e}ve eigenvalues of
fractional Brownian motion on $[0,1]$ with the Hurst exponent $H=0.7$. Of
special interest to the context of the present paper is the largest
Karhunen-Lo\`{e}ve eigenvalue $\lambda _{1}$ of fractional Brownian motion,
for which Corlay obtains the approximation $\lambda _{1}\approx
0.374532521757236$.

While we do not need this value, and instead use Corlay's method to compute $%
\lambda _{1}$ for several fOU processes, we are confident that the values we
obtain for the various $\lambda _{1}$'s we use have similar levels of
accuracy to what is illustrated in \cite{C}. Corlay's method is thus one of
the main ingredients in the numerical part of our paper (see the discussion
after (\ref{fixedH}) in Section \ref{NUM}). Fractional OU processes were
proposed early on for option pricing, and recently analyzed in \cite{CR, CV}%
; these processes are versions of the volatility process in the Stein-Stein
model. Therefore, the resulting stochastic volatility models may be called
fractional Stein-Stein models. Section \ref{NUM} illustrates how, in the
case of the classical and fractional Stein-Stein models (OU and fOU
processes), the explicit, semi-explicit, or numerically accessible Karhunen-L%
\`{o}eve expansion of $X$ can be used in conjunction with the asymptotics (%
\ref{asymp}) for calibrating parameters. We find that market liquidity
considerations limit the theoretical range of applicability of calibration
strategies, but that significant practical results are nonetheless available.

\bigskip

The remainder of this article is structured as follows. Section \ref{GEN}
sets up a convenient second-chaos representation for the model's integrated
volatility. In Section \ref{MIX}, we generalize some results from \cite%
{B,H,Z}, concerning the asymptotic behavior of densities of infinite linear
combinations of chi-squared random variables, and derive precise asymptotics
for the density of the mixing distribution. Section \ref{STOCK} converts
these asymptotics into sharp asymptotic formulas for the density of the
asset price $S_{T}$, thanks to the analytic tools developed in \cite{GS, G}.
In Section \ref{IV}, we characterize the wing behavior of the implied
volatility in Gaussian stochastic volatility models. We find sharp
asymptotic formulas for the implied volatility with five explicit terms and
an error estimate. The special case of the uncorrelated Stein-Stein model is
studied in more detail in Section \ref{S:uSS}. Finally, our practical study
of calibration strategies, with numerics, is in Section \ref{NUM}.

\section{General setup and second-chaos expansion of the integrated variance 
\label{GEN}}

\label{S:GC} Let $X$ be an almost-surely continuous Gaussian process on a
filtered complete probability space $(\Omega ,\mathcal{F},\{\mathcal{F}%
_{t}\},\mathbb{P})$ with mean and covariance functions denoted by $m(t)=%
\mathbb{E}[X_{t}]$ and 
\begin{equation*}
Q(t,s)=cov(X_{t},X_{s})=\mathbb{E}\left[ \left( X_{t}-m(t)\right) \left(
X_{s}-m(s)\right) \right] ,
\end{equation*}%
respectively, and suppose the restrictions imposed in (\ref{mart}) are
satisfied. 

Define the centered version of $X$ : $\widetilde{X}_{t}:=X_{t}-m(t)$, $t\geq
0$, and fix a time horizon $T>0$. It is not hard to see that $Q(s,s)>0$ for
all $s>0$. Since the Gaussian process $X$ is almost surely continuous, the
mean function $t\mapsto m(t)$ is a continuous function on $[0,T]$, and the
covariance function $(t,s)\mapsto Q(t,s)$ is a continuous function of two
variables on $[0,T]^{2}$. Indeed, 
the continuity of the process $X$ implies its continuity in probability on $%
\Omega $. Hence, the process $X$ is continuous in the mean-square sense
(see, e.g., \cite{IR}, Lemma 1 on p. 5, or invoke the equivalence of $L^{p}$
norms on Wiener chaos, see \cite{NPbook}). Mean-square continuity of $X$
implies the continuity of the mean function on $[0,T]$. In addition, the
autocorrelation function of the process $X$, that is, the function $R(t,s)=%
\mathbb{E}\left[ X_{t}X_{s}\right] $, $(t,s)\in \lbrack 0,T]^{2}$, is
continuous (see, e.g., \cite{A}, Lemma 4.2). Finally, since $%
Q(t,s)=R(t,s)-m(t)m(s)$, the covariance function $Q$ is continuous on $%
[0,T]^{2}$. We refer the interested reader to \cite{Adler} for more
information on the continuity problems for general Gaussian processes.

In our analysis, it will be convenient to refer to the Karhunen-Lo\`{e}ve
expansion of $\widetilde{X}$. We will next provide certain details
concerning the Karhunen-Lo\`{e}ve expansion and introduce notation that will
be used throughout the paper.

Consider the covariance operator defined by 
\begin{equation*}
\mathcal{K}(f)(t)=\int_{0}^{T}f(s)Q(t,s)ds,\quad f\in L^{2}\left(
[0,T]\right) ,\quad 0\leq t\leq T.
\end{equation*}
The operator $\mathcal{K}$ is a nonnegative compact self-adjoint operator on 
$L^{2}\left( [0,T]\right) $. The non-zero eigenvalues of the operator $%
\mathcal{K}$ are of finite multiplicity, and we assume that they are
rearranged so that 
\begin{equation*}
\lambda _{1}=\lambda _{2}=\dots =\lambda _{n_{1}}>\lambda _{n_{1}+1}=\lambda
_{n_{1}+2}=\dots =\lambda _{n_{1}+n_{2}}>\dots.
\end{equation*}%
In particular, $\lambda _{1}$ is the top eigenvalue, and $n_{1}$ is its
multiplicity. It is known that the series $\sum_{n=1}^{\infty}\lambda_n$
converges. The system of eigenfunctions $E=\{e_{n}\}_{n\ge 1}$,
corresponding to the system $\{\lambda_n\}_{n\ge 1}$, is orthonormal, and
each function $e_n$ is continuous on $[0,T]$. The number $\lambda_0=0$
always belongs to the spectrum of the covariance operator, and it may happen
so that $\lambda_0$ is an eigenvalue of $\mathcal{K}$. The spectral subspace
associated with $\lambda_0$ may be infinite-dimensional, and we choose a
basis $\widetilde{E}$ in this subspace. Then $(E,\widetilde{E})$ is a
complete orthonormal system in $L^{2}\left( [0,T]\right) $. Note that the
eigenvalues and eigenfunctions of $\mathcal{K}$ depend on $T$.

The classical Karhunen-Lo\`{e}ve theorem (see, e.g., \cite{Y}, Section 26.1)
states that there exists an i.i.d. sequence of standard normal variates $%
\left\{ Z_{n}:n=1,2,\ldots \right\} $ such that 
\begin{equation}
\widetilde{X}_{t}=\sum_{n=1}^{\infty }\sqrt{\lambda _{n}}e_{n}(t)Z_{n}.
\label{E:KL}
\end{equation}

\begin{remark}
\label{R:oo} The number of positive eigenvalues may be finite. We will
assume throughout the paper that the set of positive eigenvalues is
infinite; this is the case for all illustrative examples we use, such as the
OU and fOU processes. It is easy to understand how the parameters used in
the paper change if the number of positive eigenvalues is finite.
\end{remark}

Using (\ref{E:KL}), we obtain 
\begin{equation}
\int_{0}^{T}\widetilde{X}_{t}^{2}dt=\int_{0}^{T}\left( \sum_{n=1}^{\infty }%
\sqrt{\lambda _{n}}e_{n}(t)Z_{n}\right) ^{2}dt=\sum_{n=1}^{\infty }\lambda
_{n}Z_{n}^{2}.  \label{E:KLtilde}
\end{equation}%
It is worth pointing out that the previous expression for the integrated
variance in a Gaussian model with centered volatility is in fact the most
general form of a random variable in the second Wiener chaos with
half-bounded support, with mean adjusted to ensure almost-sure positivity of
the integrated variance. This is established using a classical structure
theorem on separable Hilbert spaces, as explained in \cite[Section 2.7.4]%
{NPbook}. In other words (also see \cite[Section 2.7.3]{NPbook} for
additional details), any prescribed mean-adjusted integrated variance in the
second chaos is of the form 
\begin{equation*}
V\left( T\right) :=\iint_{[0,T]^{2}}G\left( s,t\right) dZ\left( s\right)
dZ\left( t\right) +2\left\Vert G\right\Vert _{L^{2}\left( [0,T]^{2}\right)
}^{2}
\end{equation*}%
for some standard Wiener process $Z$ and some function $G\in L^{2}\left(
[0,T]^{2}\right) $. Moreover one can find a centered Gaussian process $%
\tilde{X}$ such that $V\left( T\right) =\int_{0}^{T}\widetilde{X}_{t}^{2}dt$
and one can compute the coefficients $\lambda _{n}$ in the Karhunen-Lo\`{e}%
ve representation (\ref{E:KLtilde}) as the eigenvalues of the covariance of $%
\tilde{X}$.

Let us set 
\begin{equation}
s=\int_{0}^{T}m(t)^{2}dt\quad\mbox{and}\quad\delta
_{n}=\int_{0}^{T}m(t)e_{n}(t)dt,\quad n\geq 1.  \label{E:delt}
\end{equation}
Then, it follows from (\ref{E:KLtilde}) and (\ref{E:delt}) that, for the
non-centered process $X$, 
\begin{align}
\int_{0}^{T}X_{t}^{2}dt&=\sum_{n=1}^{\infty }\lambda
_{n}Z_{n}^{2}+2\sum_{n=1}^{\infty }\sqrt{\lambda _{n}}\delta_n Z_{n}+s 
\notag \\
&=\sum_{n=1}^{\infty }\lambda _{n}\left[ Z_{n}+\frac{\delta _{n}}{\sqrt{
\lambda _{n}}}\right] ^{2}+\left( s-\sum_{n=1}^{\infty }\delta
_{n}^{2}\right).  \label{E:kl}
\end{align}

\begin{remark}
\label{R:1} It is easy to see, using (\ref{E:kl}) that if the function $%
t\mapsto m(t)$ belongs to the subspace of $L^{2}[0,T]$ generated by the
orthonormal system $E$, then 
\begin{equation}
\int_{0}^{T}X_{t}^{2}dt=\sum_{n=1}^{\infty }\lambda _{n}\left[ Z_{n}+\frac{%
\delta _{n}}{\sqrt{\lambda _{n}}}\right] ^{2}.  \label{E:eco1}
\end{equation}
For instance, the equality in (\ref{E:eco1}) holds if $\lambda=0$ is not an
eigenvalue of the operator $\mathcal{K}$. In the case where the process $X$
is centered, we have 
\begin{equation}
\int_{0}^{T}X_{t}^{2}dt=\sum_{n=1}^{\infty }\lambda _{n} Z_{n}^{2}.
\label{E:eco2}
\end{equation}
Note that the right-hand sides of (\ref{E:eco1}) and (\ref{E:eco2}) are
infinite linear combinations of chi-square random variables.
\end{remark}

Let us denote the chi-squared distribution with the number of degrees of
freedom $k$ and the parameter of noncentrality $\lambda $ by $\chi
^{2}(k,\lambda )$ (more information on such distributions can be found in 
\cite{G} or in any probability textbook; the convention used here is that
the mean of $\chi ^{2}(k,\lambda )$ is $k+\lambda $). Set 
\begin{equation}
\Lambda _{T}=\frac{1}{\lambda _{1}}\left(
\int_{0}^{T}X_{t}^{2}dt-s+\sum_{n=1}^{\infty }\delta _{n}^{2}\right)
\label{E:ella}
\end{equation}%
and denote 
\begin{equation}
\xi _{0}=\sum_{n=1}^{n_{1}}\delta _{n}^{2};\quad \xi
_{k}=\sum_{n=n_{1}+\cdots +n_{k}+1}^{n_{1}+\cdots +n_{k+1}}\delta
_{n}^{2},\quad k\geq 1.  \label{E:labb}
\end{equation}%
Denote also 
\begin{equation}
\delta =\frac{\xi _{0}}{\lambda _{1}}.  \label{E:deli}
\end{equation}%
Then, it is not hard to see, using (\ref{E:kl}), (\ref{E:ella}), (\ref%
{E:labb}), and (\ref{E:deli}), that 
\begin{equation}
\Lambda _{T}=\chi ^{2}\left( n_{1},\delta \right) +\sum_{k=2}^{\infty }\frac{%
\lambda _{n_{1}+\cdots +n_{k-1}+1}}{\lambda _{1}}\chi ^{2}\left( n_{k},\frac{%
1}{\lambda _{n_{1}+\cdots +n_{k-1}+1}}\xi _{k-1}\right) ,  \label{E:kl4}
\end{equation}%
where the repeated chi-squared notation is used abusively to denote
independent chi-square random variables. We will denote the distribution
density of $\Lambda _{T}$ by $q_{T}$.

\section{Asymptotics of the mixing density\label{MIX}}

The asymptotic behavior of the distribution density of an infinite linear
combination of independent central chi-squared random variables was
characterized by Zolotarev (see formula (5) in \cite{Z}). In \cite{H},
Hoeffding found more general and sharp formulas. The results obtained by
Zolotarev and Hoeffding were generalized to the case of noncentral
chi-squared variables by Beran (see \cite{B}). Note that Beran considered
infinite sums of chi-squared variables with all the noncentrality parameters
strictly greater than zero. Since there is a gap betweed the results of
Zolotarev, Hoeffding, and Beran, we decided to include a discussion of a
similar result, where there are no restrictions on the noncentrality
parameters. Keeping in mind the series in (\ref{E:kl4}), we will study the
asymptotic behavior of the density $q$ of the following infinite sum: 
\begin{equation}
\Lambda=\chi ^{2}\left( n_{1},\eta_1\right) +\sum_{k=2}^{\infty}\rho_k\chi
^{2}\left( n_k, \eta_k\right),  \label{E:klm4}
\end{equation}
where $n_k\ge 1$, $k\ge 1$, are integers, and $\eta_k\ge 0$ for all $k\ge 2$%
. If $\eta_k=0$ for some $k$, then the corresponding chi-squared random
variable is central. It is also assumed that $1>\rho_2>\rho_3>\cdots> 0$, 
\begin{equation}
\sum_{k=2}^{\infty}n_k\rho_k<\infty,\quad\sum_{k=2}^{\infty}\eta_k\rho_k<%
\infty,  \label{E:ext}
\end{equation}
and the chi-squared random variables in (\ref{E:klm4}) are independent. We
will denote by $q_{\Lambda}$ the distribution density of the random variable 
$\Lambda$.

The distribution density of a chi-squared random variable $\chi ^{2}(n,\eta)$
will be denoted by $p_{\chi^2}(\cdot;n,\eta)$. It is known that if $\eta> 0$%
, then 
\begin{equation}
p_{\chi^2}(x;n,\eta)=\frac{1}{2}\left(\frac{x}{\eta}\right)^{\frac{n}{4}-%
\frac{1}{2}} e^{-\frac{x+\eta}{2}}I_{\frac{n}{2}-1}(\sqrt{\eta x}),\quad x>
0,  \label{E:gell}
\end{equation}
where $I_{\nu}$ is the modified Bessel function of the first kind (see,
e.g., \cite{G}, Theorem 1.31). For $\eta=0$, we have 
\begin{equation}
p_{\chi^2}(x;n,0)=\frac{1}{2^{\frac{n}{2}}\Gamma\left(\frac{n}{2}\right)}x^{%
\frac{n-2}{2}}\exp\left\{-\frac{x}{2} \right\},\quad x> 0  \label{E:gelli}
\end{equation}
(see, e.g., Lemma 1.27 in \cite{G}). It is not hard to see that $%
\lim_{\eta\rightarrow 0}p_{\chi^2}(x;n,\eta)=p_{\chi^2}(x;n,0). $ Let us
also mention that 
\begin{equation}
I_{\nu}(t)=\frac{e^t}{\sqrt{2\pi t}}\left(1+O\left(t^{-1}\right)\right),%
\quad t\rightarrow\infty,  \label{E:56}
\end{equation}
for all $\nu\ge 0$ (see, e.g., 9.6.7 in \cite{AS}).

It is known that for $t<\frac{1}{2}$, the moment generating function of a
chi-squared random variable $\chi ^{2}(n,\eta )$ with $\eta \geq 0$ is as
follows: 
\begin{equation}
t\mapsto \frac{1}{(1-2t)^{\frac{n}{2}}}\exp \left\{ \frac{\eta t}{1-2t}%
\right\} .  \label{E:mgf}
\end{equation}%
In the formulation of the next result, we will use the following number: 
\begin{equation*}
A=\mathbb{E}\left[ \exp \left\{ \frac{U}{2}\right\} \right] ,
\end{equation*}
where $U$ is defined as $\Lambda $ without the first term:%
\begin{equation}
U=\sum_{k=2}^{\infty }\rho _{k}\chi ^{2}\left( n_{k},\eta _{k}\right) .
\label{E:set}
\end{equation}%
Next, using (\ref{E:set}) and (\ref{E:mgf}), we obtain 
\begin{equation}
A=\prod_{k\geq 2}(1-\rho _{k})^{-\frac{n_{k}}{2}}\exp \left\{ \frac{\eta
_{k}\rho _{k}}{2(1-\rho _{k})}\right\} ,  \label{E:Aa}
\end{equation}%
and it is not hard to see, by taking into account (\ref{E:ext}), that $%
A<\infty $.

The next assertion is based on the results of Zolotarev, Hoeffding, and
Beran.

\begin{theorem}
\label{T:zhb} Suppose the conditions formulated after formula (\ref{E:klm4})
hold. If $\eta_1> 0$, then 
\begin{equation}
\left|\frac{q_{\Lambda}(x)}{p_{\chi^2}\left(x;n_1,\eta_1\right)}%
-A\right|=O\left(x^{-\frac{1}{2}}\right)  \label{E:forma1}
\end{equation}
as $x\rightarrow\infty$, while if $\eta_1=0$, then 
\begin{equation}
\left|\frac{q_{\Lambda}(x)}{p_{\chi^2}\left(x;n_1,0\right)}%
-A\right|=O\left(x^{-1}\right)  \label{E:forma2}
\end{equation}
as $x\rightarrow\infty$. In the formulas above, the constant $A$ is given by
(\ref{E:Aa}).
\end{theorem}

\begin{remark}
\label{R:HZ} Theorem \ref{T:zhb} is a minor generalization of similar
propositions obtained in \cite{H} and \cite{B}. The difference between those
propositions and our Theorem 3 is that \cite{H} assumes that all the
chi-squared variables in (\ref{E:klm4}) are central, in Theorem 2 in \cite{B}
they are all assumed noncentral, while in our Theorem \ref{T:zhb}, we may
have any combination of central and non-central chi-squared variables.

Theorem 2 in \cite{B} provides an asymptotic formula for the complementary
distribution function (tail) of an infinite linear combination of
independent noncentral chi-square random variables. A sharper formula for
the distribution density of such a linear combination can be extracted from
the proof of Theorem 2 in \cite{B} (see the very end of that proof).
\end{remark}

\emph{Sketch of the proof of Theorem \ref{T:zhb}.} We follow the proof of
Theorem 2 in \cite{B}. Let us denote by $p_{U}$ the distribution density of
the random variable $U$ in (\ref{E:set}). Then 
\begin{equation}
q_{\Lambda }(x)=\int_{0}^{x}p_{\chi ^{2}}\left( x-y;n_{1},\eta _{1}\right)
p_{U}(y)dy,\quad x>0.  \label{E:lab}
\end{equation}%
Let us fix $0<\alpha <1$. We have 
\begin{equation*}
\frac{q_{\Lambda }(x)}{p_{\chi ^{2}}\left( x;n_{1},\eta _{1}\right) }%
-A=V_{1}+V_{2}+V_{3}+V_{4},
\end{equation*}%
where 
\begin{align*}
& V_{1}=\int_{0}^{\alpha x}\left[ \left( 1-\frac{y}{x}\right) ^{\frac{n_{1}}{%
2}-1}-1\right] W(x,y)\exp \left\{ \frac{y}{2}\right\} p_{U}(y)dy, \\
& V_{2}=\int_{\alpha x}^{x}\left( 1-\frac{y}{x}\right) ^{\frac{n_{1}}{2}%
-1}W(x,y)\exp \left\{ \frac{y}{2}\right\} p_{U}(y)dy, \\
& V_{3}=\int_{0}^{\alpha x}[W(x,y)-1]\exp \left\{ \frac{y}{2}\right\}
p_{U}(y)dy, \\
& V_{4}=-\int_{\alpha x}^{\infty }\exp \left\{ \frac{y}{2}\right\}
p_{U}(y)dy.
\end{align*}%
In the formulas above, the function $W$ is defined by 
\begin{equation*}
W(x,y)=\left( 1-\frac{y}{x}\right) ^{-\frac{n_{1}}{4}+\frac{1}{2}}\frac{I_{%
\frac{n}{2}-1}(\sqrt{\eta (x-y)})}{I_{\frac{n}{2}-1}(\sqrt{\eta x})}
\end{equation*}%
if $\eta _{1}>0$, while if $\eta _{1}=0$, then $W(x,y)=1$. Note that $\eta
_{1}=0$ implies $V_{3}=0$. Then, using calculations similar to those in the
proof of Theorem 2 in \cite{B}, we find that when $\eta _{1}>0$, $V_{3}$ is
the leading term and is of order $x^{-1/2}$, while when $\eta _{1}=0$, this
term vanishes, and the next highest-order term is of order $x^{-1}$. This
explains the different error estimates in the formulas in Theorem \ref{T:zhb}%
. We include two auxiliary statements below (Lemmas \ref{L:ue} and \ref%
{L:ues}). They are needed to perform the above-mentioned calculations. This
finishes the sketch of the proof of Theorem \ref{T:zhb}. \hspace*{\fill}$%
\square \vspace{0.1in}$

\begin{lemma}
\label{L:ue} Under the assumptions in Theorem \ref{T:zhb}, the following
holds: 
\begin{equation*}
\mathbb{E}\left[ U\exp \left\{ \frac{U}{2}\right\} \right] <\infty .
\end{equation*}
\end{lemma}

\emph{Proof.} This follows in a straightforward way (details omitted), using
(\ref{E:set}), differentiating the function in (\ref{E:mgf}), and taking
into account the resulting formula and (\ref{E:Aa}), implying that: 
\begin{equation*}
\mathbb{E}\left[ U\exp \left\{ \frac{U}{2}\right\} \right]
=A\sum_{k=2}^{\infty }\rho _{k}\left[ \frac{n_{k}}{1-\rho _{k}}+\frac{\eta
_{k}}{(1-\rho _{k})^{2}}\right]
\end{equation*}%
so that that Lemma \ref{L:ue} clearly follows from (\ref{E:ext}) and the
finiteness $A<\infty $. \hspace*{\fill}$\square \vspace{0.1in}$

\begin{lemma}
\label{L:ues} Under the restrictions in Theorem \ref{T:zhb}, there exists a
number $\varepsilon> 0$, depending on the constants in (\ref{E:set}), and
such that 
\begin{equation*}
p_U(y)=O\left(\exp\left\{-\left(\frac{1}{2}+\varepsilon\right)y\right\}%
\right)
\end{equation*}
as $y\rightarrow\infty$.
\end{lemma}

\emph{Proof.} We have $U=\rho_2\widetilde{U}$, where $\widetilde{U}%
=\sum_{k=2}^{\infty}\widetilde{\rho}_k\chi ^{2}\left( n_k, \eta_k\right) $
with $\widetilde{\rho}_2=1$ and $\widetilde{\rho}_k=\frac{\rho_k}{\rho_2}$
for all $k\ge 3$. It follows that $p_U(x)=\frac{1}{\rho_2}p_{\widetilde{U}%
}\left(\frac{1}{\rho_2}y\right)$. Since $\rho_2< 1$, and the random varaible 
$\widetilde{U}$ has the same structure as the random variable $\Lambda$ in (%
\ref{E:klm4}), it suffices to show that for every $\tau> 0$, 
\begin{equation}
q_{\Lambda}(x)=O\left(\exp\left\{\left(-\frac{1}{2}+\tau\right)y\right\}%
\right)  \label{E:gene}
\end{equation}
as $x\rightarrow\infty$.

Let us first assume $n_{1}\geq 2$. Then, using (\ref{E:lab}), (\ref{E:gell}%
), the fact that the function $I_{\nu }$ is increasing, and (\ref{E:56}), we
obtain (\ref{E:gene}). Next, let $n_{1}=1$. We have 
\begin{equation*}
\Lambda \leq \chi ^{2}\left( n_{1},\eta _{1}\right) +\chi ^{2}\left(
n_{2},\eta _{2}\right) +\sum_{k=3}^{\infty }\rho _{k}\chi ^{2}\left(
n_{k},\eta _{k}\right) .
\end{equation*}%
Next, we observe that $\chi ^{2}\left( n_{1},\eta _{1}\right) +\chi
^{2}\left( n_{2},\eta _{2}\right) =\chi ^{2}\left( n_{1}+n_{2},\eta
_{1}+\eta _{2}\right) $ (the previous formula follows from (\ref{E:mgf})).
This reduces the case where $n_{1}=1$ to the already considered case where $%
n_{1}>1$. It follows from the previous reasoning that (\ref{E:gene}) holds.
This completes the proof of Lemma \ref{L:ues}. \hspace*{\fill}$\square 
\vspace{0.1in}$

Theorem \ref{T:zhb} will allow us to characterize the asymptotic behavior of
the distribution density $q_{T}$ of the random variable $\Lambda _{T}$
defined by (\ref{E:kl4}). Using Theorem \ref{T:zhb}, we see that if $\delta
>0$, then 
\begin{equation}
\left\vert \frac{q_{T}(x)}{p_{\chi ^{2}}\left( x;n_{1},\delta \right) }%
-A\right\vert =O\left( x^{-\frac{1}{2}}\right)  \label{E:del}
\end{equation}%
as $x\rightarrow \infty $, while if $\delta =0$, then 
\begin{equation}
\left\vert \frac{q_{T}(x)}{p_{\chi ^{2}}\left( x;n_{1},0\right) }%
-A\right\vert =O\left( x^{-1}\right)  \label{E:delic}
\end{equation}%
as $x\rightarrow \infty $. In (\ref{E:del}) and (\ref{E:delic}), the formula
for $A$ is%
\begin{equation}
A=\prod_{j>n_{1}}\left( \frac{\lambda _{1}}{\lambda _{1}-\lambda _{j}}%
\right) ^{\frac{1}{2}}\exp \left\{ \frac{\delta _{j}^{2}}{2(\lambda
_{1}-\lambda _{j})}\right\} .  \label{E:della}
\end{equation}

It is clear that for $\delta> 0$, (\ref{E:del}) gives 
\begin{equation}
q_T(x)=Ap_{\chi^2}\left(x;n_1,\delta\right)\left(1+O\left(x^{-\frac{1}{2}%
}\right)\right)  \label{E:dellas}
\end{equation}
as $x\rightarrow\infty$. Similarly, for $\delta=0$, (\ref{E:delic}) implies
that 
\begin{equation}
q_T(x)=Ap_{\chi^2}\left(x;n_1,0\right)\left(1+O\left(x^{-1}\right)\right)
\label{E:dellus}
\end{equation}
as $x\rightarrow\infty$.

It is known that 
\begin{equation*}
I_{\nu}(t)=\frac{e^t}{\sqrt{2\pi t}}\left(1+O\left(t^{-1}\right)\right)
\quad t\rightarrow\infty,
\end{equation*}
(see 9.7.1 in \cite{AS}). Next, using the previous formula in (\ref{E:gell}%
), we obtain 
\begin{equation}
p_{\chi^2}(x;n,\lambda)=\frac{1}{2\sqrt{2\pi}}\lambda^{-\frac{n-1}{4}}x^{%
\frac{n-3}{4}} e^{\sqrt{\lambda x}}e^{-\frac{x+\lambda}{2}%
}\left(1+O\left(x^{-\frac{1}{2}}\right)\right)  \label{E:gellla}
\end{equation}
as $x\rightarrow\infty$.

Recall that we denoted by $q_T$ the distribution density of the random
variable $\Lambda_T$ defined by (\ref{E:ella}). Using (\ref{E:dellas}) and (%
\ref{E:gellla}), we see that for $\delta> 0$, 
\begin{equation}
q_T(x)=\frac{A}{2\sqrt{2\pi}}\delta^{-\frac{n_1-1}{4}}x^{\frac{n_1-3}{4}} e^{%
\sqrt{\delta x}}e^{-\frac{x+\delta}{2}}\left(1+O\left(x^{-\frac{1}{2}%
}\right)\right)  \label{E:qu}
\end{equation}
as $x\rightarrow\infty$. The constants $A$ and $\delta$ in (\ref{E:qu}) are
defined by (\ref{E:della}) and (\ref{E:deli}), respectively.

We next turn our attention to the case where $\delta =0$. In this case, it
follows from (\ref{E:dellus}), (\ref{E:della}), and (\ref{E:gelli}) that 
\begin{align}
q_{T}(x)& =\frac{1}{2^{\frac{n_{1}}{2}}\Gamma \left( \frac{n_{1}}{2}\right) }%
\prod_{k>n_{1}}\left( \frac{\lambda _{1}}{\lambda _{1}-\lambda _{k}}\right)
^{\frac{1}{2}}\exp \left\{ \frac{\delta _{k}^{2}}{2(\lambda _{1}-\lambda
_{k})}\right\} x^{\frac{n_{1}-2}{2}}\exp \left\{ -\frac{x}{2}\right\}  \notag
\\
& \quad \times \left( 1+O\left( x^{-1}\right) \right)  \label{E:qus}
\end{align}%
as $x\rightarrow \infty $.

\begin{remark}
In comparing (\ref{E:qu}) and (\ref{E:qus}), one notes that the latter
cannot be obtained from the former by letting $\delta $ tend to $0$: while
the exponential terms would match, the power terms do not, and an additional
discrepancy would occur when $n_{1}>1$ from the singular term $\delta
^{-(n_{1}-1)/4}.$
\end{remark}

Our next goal is to characterize the asymptotic behavior of the distribution
density $p_T$ of the integrated variance $\Gamma_T=\int_0^TX_t^2dt$. The
following statement holds.

\begin{theorem}
\label{T:gella} (i)\,\,If $\delta> 0$, then 
\begin{align}
p_T(x)&=Cx^{\frac{n_1-3}{4}}\exp\left\{\sqrt{\frac{\delta}{\lambda_1}}\sqrt{x%
}\right\} \exp\left\{-\frac{x}{2\lambda_1}\right\}\left(1+O\left(x^{-\frac{1%
}{2}}\right)\right)  \label{E:finn}
\end{align}
as $x\rightarrow\infty$, where 
\begin{align}
C&=\frac{1}{2\sqrt{2\pi}}\lambda_1^{-\frac{n_1+1}{4}}\delta^{-\frac{n_1-1}{4}%
} \exp\left\{\frac{s-\sum_{n=1}^{\infty} \delta_n^2} {2\lambda_1}-\frac{%
\delta}{2}\right\}  \notag \\
&\quad\times\prod_{j> n_1}^{\infty}\left(\frac{\lambda_1}{\lambda_1-\lambda_j%
}\right)^{\frac{1}{2}}\exp\left\{\frac{\delta_j^2} {2(\lambda_1-\lambda_j)}%
\right\}.  \label{E:rr}
\end{align}
(ii)\,\,If $\delta=0$, then 
\begin{align}
p_T(x)&=Cx^{\frac{n_1-2}{2}}\exp\left\{-\frac{x}{2\lambda_1}\right\}
\left(1+O\left(x^{-1}\right)\right)  \label{E:finna}
\end{align}
as $x\rightarrow\infty$, where 
\begin{equation}
C=\frac{1}{2^{\frac{n_1}{2}}\Gamma\left(\frac{n_1}{2}\right)}\lambda_1^{-%
\frac{n_1}{2}} \exp\left\{\frac{s-\sum_{n> n_1}\delta_n^2} {2\lambda_1}%
\right\} \prod_{k> n_1}\left(\frac{\lambda_1}{\lambda_1-\lambda_k}\right)^{%
\frac{1}{2}}\exp\left\{\frac{\delta_j^2} {2(\lambda_1-\lambda_j)}\right\}.
\label{E:ho}
\end{equation}
In particular, if the process $X$ is centered, then (\ref{E:finna}) holds
with 
\begin{equation}
C=\frac{1}{2^{\frac{n_1}{2}}\Gamma\left(\frac{n_1}{2}\right)}\lambda_1^{-%
\frac{n_1}{2}} \prod_{k> n_1}\left(\frac{\lambda_1}{\lambda_1-\lambda_k}%
\right)^{\frac{1}{2}}.  \label{E:hor}
\end{equation}
\end{theorem}

\emph{Proof.}\ It follows from (\ref{E:ella}) that $p_{T}(x)=\frac{1}{%
\lambda _{1}}q_{T}\left( \frac{1}{\lambda _{1}}(x-\tau )\right) ,$ where $%
\tau =s-\sum_{n=1}^{\infty }\delta _{n}^{2}$. Now, formula (\ref{E:qu})
implies that 
\begin{align}
p_{T}(x)& =\frac{A}{2\sqrt{2\pi }}\frac{1}{\lambda _{1}}\lambda _{1}^{\frac{%
n_{1}-1}{4}}\delta ^{-\frac{n_{1}-1}{4}}  \notag \\
& \quad \times \left( \sum_{n=1}^{n_{1}}\delta _{n}^{2}\right) ^{-\frac{%
n_{1}-1}{4}}\lambda _{1}^{-\frac{n_{1}-3}{4}}\exp \left\{ \frac{\tau
-\sum_{n=1}^{n_{1}}\delta _{n}^{2}}{2\lambda _{1}}\right\}  \notag \\
& \quad \times (x-\tau )^{\frac{n_{1}-3}{4}}\exp \left\{ \sqrt{\frac{\delta
(x-\tau )}{\lambda _{1}}}\right\} \exp \left\{ -\frac{x}{2\lambda _{1}}%
\right\}  \notag \\
& \quad \times \left( 1+O\left( x^{-\frac{1}{2}}\right) \right)  \label{E:ah}
\end{align}%
as $x\rightarrow \infty $.

Next, taking into account that 
\begin{equation*}
(x-\tau )^{\frac{n_{1}-3}{4}}=x^{\frac{n_{1}-3}{4}}(1+O(x^{-1}))
\end{equation*}%
and 
\begin{equation*}
\exp \left\{ \sqrt{\frac{\delta (x-\tau )}{\lambda _{1}}}\right\} =\exp
\left\{ \sqrt{\frac{\delta }{\lambda _{1}}}\sqrt{x}\right\} (1+O(x^{-\frac{1%
}{2}})),
\end{equation*}%
and simplifying the expression on the right-hand side of (\ref{E:ah}), we
obtain (\ref{E:finn}). The proof of formula (\ref{E:finna}) is similar,
using (\ref{E:qus}). \hspace*{\fill}$\square \vspace{0.1in}$

\section{Asset price asymptotics\label{STOCK}}

\label{S:Asy} The model in (\ref{mart}) is described by a linear stochastic
differential equation. Therefore, we have 
\begin{equation}
S_t=s_0\exp\left\{rt-\frac{1}{2}\int_0^tX_s^2ds+\int_0^t|X_s|dW_s\right\}.
\label{E:DD}
\end{equation}
The previous equality can be derived from the Dol\'{e}ans-Dade formula (see 
\cite{RY}). Since the processes $X$ and $W$ are independent, the following
formula holds for the distribution density $D_t$ of the asset price $S_t$: 
\begin{align}
&D_t(x)=\frac{\sqrt{s_0e^{rt}}}{\sqrt{2\pi t}}x^{-\frac{3}{2}}
\int_0^{\infty} y^{-1}\exp\left\{-\left[\frac{\log^2\frac{x}{s_0e^{rt}}}{%
2ty^2}+\frac{ty^2}{8}\right]\right\}\widetilde{p}_t(y)dy.  \label{E:sic2}
\end{align}
In (\ref{E:sic2}), $\widetilde{p}_t$ is the distribution density of the
random variable $\widetilde{Y}_t=\left\{\frac{1}{t}\int_0^tX^2_sds\right\}^{%
\frac{1}{2}}. $ The function $\widetilde{p}_t$ is called the mixing density.
The proof of formula (\ref{E:sic2}) can be found in \cite{G} (see (3.5) in 
\cite{G}). It is not hard to see that $\widetilde{p}_t(y)=2typ_t\left(ty^2%
\right), $ where the symbol $p_t$ stands for the density of the realized
volatility $Y_t=\int_0^tX^2_sds$.

Suppose first that the volatility process is such that $\delta> 0$. It
follows from formula (\ref{E:finn}) that 
\begin{align}
&\widetilde{p}_t(y)=\widetilde{A}y^{\frac{n_1-1}{2}} \exp\left\{\widetilde{B}%
y\right\} \exp\left\{-\widetilde{C}y^2\right\}
\left(1+O\left(y^{-1}\right)\right)  \label{E:finchi}
\end{align}
as $y\rightarrow\infty$, where 
\begin{equation}
\widetilde{A}=2Ct^{\frac{n_1+1}{4}}, \quad\widetilde{B}=\sqrt{\frac{\delta t%
}{\lambda_1}},\quad\widetilde{C}=\frac{t}{2\lambda_1}.  \label{E:fitch}
\end{equation}
In (\ref{E:fitch}), the constant $C$ is defined by (\ref{E:rr}).

Our next goal is to estimate the function $D_t$. The asymptotic behavior as $%
x\rightarrow\infty$ of the integral appearing in (\ref{E:sic2}) was studied
in \cite{GS} (see also Section 5.3 in \cite{G}). It is explained in \cite{G}
how to get an asymptotic formula for the integral in (\ref{E:sic2}) in the
case where an asymptotic formula for the mixing density is similar to
formula (\ref{E:finchi}). We refer the reader to the derivation of Theorem
6.1 in \cite{G}, which is based on formula (5.133) in Section 5.6 of \cite{G}
and Theorem 5.5 in \cite{G}. The latter theorem concerns the asymptotic
behavior of integrals with lognormal kernels. Having obtained an asymptotic
formula for the distribution density of the asset price, we can find a
similar asymptotic formula for the call pricing function $C$ at large
strikes, and then obtain an asymptotic formula for the implied volatility $I$
(see Section 10.5 in \cite{G}).

Theorem 5.5 in \cite{G} provides an asymptotic formula as $%
w\rightarrow\infty $ for the integral 
\begin{equation*}
\int_0^{\infty}A(y)\exp\left\{-\left(\frac{w^2}{y^2}+k^2y^2\right)\right\}dy,
\end{equation*}
where $k> 0$ is fixed, and it is assumed that 
\begin{equation*}
A(y)=e^{ly}\zeta(y)(1+O(b(y)))
\end{equation*}
as $y\rightarrow\infty$. In the previous asymptotic formula, $l$ is a real
number, and $\zeta$ and $b$ are functions satisfying certain conditions.

Let us fix $T> 0$. Our goal is to use Theorem 5.5 in \cite{G} with 
\begin{equation*}
A(y)=y^{-1}\widetilde{p}_T(y)\exp\left\{\widetilde{C}y^2\right\},
\end{equation*}
$l=\widetilde{B}$, $\zeta(y)=\widetilde{A}y^{\frac{n_1-3}{2}}$, $b(y)=y^{-1}$%
, $w=(2T)^{-\frac{1}{2}}\log\frac{x}{s_0e^{rT}}$, $k=\frac{\sqrt{8\widetilde{%
C}+T}}{2\sqrt{2}}$, and $\gamma=1$ (see the formulation of Theorem 5.5 in 
\cite{G} for the meaning of the constant $\gamma$). This gives 
\begin{align}
&\int_0^{\infty} y^{-1}\exp\left\{-\left[\frac{\log^2\frac{x}{s_0e^{rT}}}{%
2ty^2}+\frac{Ty^2}{8}\right]\right\}\widetilde{p}_T(y)dy=\frac{\widetilde{A}%
2^{\frac{n_1-1}{4}}\sqrt{\pi}} {T^{\frac{n_1-3}{8}}(8\widetilde{C}+T)^{\frac{%
n_1+1}{8}}}\left(s_0e^{rT}\right) ^{\frac{\sqrt{8\widetilde{C}+T}}{2\sqrt{T}}%
}  \notag \\
&\exp\left\{\frac{\widetilde{B}^2}{2(8\widetilde{C}+T)}\right\}\left(\log%
\frac{x}{s_0e^{rT}}\right)^{\frac{n_1-3}{4}}x^{-\frac{\sqrt{8\widetilde{C}+T}%
}{2\sqrt{T}}}\exp\left\{\frac{\widetilde{B}\sqrt{2}}{T^{\frac{1}{4}}(8%
\widetilde{C}+T)^{\frac{1}{4}}}\sqrt{\log\frac{x}{s_0e^{rT}}}\right\}  \notag
\\
&\left(1+O\left((\log\frac{x}{s_0e^{rT}})^{-\frac{1}{2}}\right)\right)
\label{E:asym1}
\end{align}
as $x\rightarrow\infty$. Next, using (\ref{E:sic2}) and (\ref{E:asym1}), we
obtain 
\begin{align}
&D_T(x)=\frac{\widetilde{A}2^{\frac{n_1-3}{4}}} {T^{\frac{n_1+1}{8}}(8%
\widetilde{C}+T)^{\frac{n_1+1}{8}}}\left(s_0e^{rT}\right) ^{\frac{1}{2}+%
\frac{\sqrt{8\widetilde{C}+T}}{2\sqrt{T}}}\exp\left\{\frac{\widetilde{B}^2}{%
2(8\widetilde{C}+T)}\right\}  \notag \\
&\left(\log\frac{x}{s_0e^{rT}}\right)^{\frac{n_1-3}{4}}x^{-\left(\frac{3}{2}+%
\frac{\sqrt{8\widetilde{C}+T}}{2\sqrt{T}}\right)}\exp\left\{\frac{\widetilde{%
B}\sqrt{2}}{T^{\frac{1}{4}}(8\widetilde{C}+T)^{\frac{1}{4}}}\sqrt{\log\frac{x%
}{s_0e^{rT}}}\right\}  \notag \\
&\left(1+O\left((\log\frac{x}{s_0e^{rT}})^{-\frac{1}{2}}\right)\right)
\label{E:asym11}
\end{align}
as $x\rightarrow\infty$.

The next assertion can be obtained by using (\ref{E:fitch}) in (\ref%
{E:asym11}) and simplifying the resulting expressions.

\begin{theorem}
\label{T:tt} If $\delta> 0$, then 
\begin{align}
&D_T(x)=V\left(\log\frac{x}{s_0e^{rT}}\right)^{\frac{n_1-3}{4}}x^{-\left(%
\frac{3}{2}+\frac{\sqrt{4+\lambda_1}}{2\sqrt{\lambda_1}}\right)}\exp\left\{%
\frac{\sqrt{2\delta} }{\lambda_1^{\frac{1}{4}}(4+\lambda_1)^{\frac{1}{4}}}%
\sqrt{\log\frac{x}{s_0e^{rT}}}\right\}  \notag \\
&\left(1+O\left((\log\frac{x}{s_0e^{rT}})^{-\frac{1}{2}}\right)\right)
\label{E:vd}
\end{align}
as $x\rightarrow\infty$, where 
\begin{align}
&V=\frac{2^{\frac{n_1-5}{4}}}{\sqrt{\pi} \lambda_1^{\frac{n_1+1}{8}%
}(4+\lambda_1)^{\frac{n_1+1}{8}}}\delta^{-\frac{n_1-1}{4}}
\left(s_0e^{rT}\right) ^{\frac{1}{2}+\frac{\sqrt{4+\lambda_1}}{2\sqrt{%
\lambda_1}}} \exp\left\{-\frac{\delta(3+\lambda_1)} {2(4+\lambda_1)}\right\}
\notag \\
&\exp\left\{\frac{s-\sum_{n=1}^{\infty} \delta_n^2} {2\lambda_1}%
\right\}\prod_{k> n_1}\left(\frac{\lambda_1}{\lambda_1-\lambda_k}\right)^{%
\frac{1}{2}} \exp\left\{\frac{\delta_k^2}{2(\lambda_1-\lambda_k)}\right\}.
\label{E:oo}
\end{align}
\end{theorem}

Formula (\ref{E:vd}) describes the asymptotic behavior of the asset price
density in a Gaussian stochastic volatility model in terms of the Karhunen-Lo%
\`{e}ve parameters, the initial condition $s_0$, the interest rate $r$, and
the time horizon $T$. Note that the Karhunen-Lo\`{e}ve parameters depend on $%
T$, while the constant $V$ depends on $s_0$ and $r$. We will sometimes use
the notation $V(s_0,r)$ to emphasize this dependence.

An asymptotic formula similar to that in (\ref{E:vd}) can be also obtained
for $\delta=0$, using (\ref{E:finna}) and (\ref{E:ho}) instead of (\ref%
{E:finn}). We will next formulate this asymptotic formula for a special
model where the volatility is described by a centered Gaussian process.

\begin{theorem}
\label{T:ttt} If the process $X$ is centered, then 
\begin{align}
&D_T(x)=U\left(\log\frac{x}{s_0e^{rT}}\right)^{\frac{n_1-2}{2}}x^{-\left(%
\frac{3}{2}+\frac{\sqrt{4+\lambda_1}}{2\sqrt{\lambda_1}}\right)}\left(1+O%
\left((\log\frac{x}{s_0e^{rT}})^{-\frac{1}{2}}\right)\right)  \label{E:vdo}
\end{align}
as $x\rightarrow\infty$, where 
\begin{align}
&U=\frac{1}{\Gamma\left(\frac{n_1}{2}\right)\lambda_1^{\frac{n_1}{4}%
}(4+\lambda_1)^{\frac{n_1}{4}}} (s_0e^{rT})^{\frac{1}{2}+\frac{\sqrt{%
4+\lambda_1}}{2\sqrt{\lambda_1}}}\prod_{k> n_1}^{\infty}\left(\frac{\lambda_1%
}{\lambda_1-\lambda_k}\right)^{\frac{1}{2}}.  \label{E:oi}
\end{align}
\end{theorem}

\section{Asymptotics of the implied volatility\label{IV}}

Taking into account formula (\ref{E:DD}), we see that the discounted asset
price process in a Gaussian stochastic volatility model is given by the
following stochastic exponential: 
\begin{equation}
\widetilde{S}_t=e^{-rt}S_t=s_0\exp\left\{-\frac{1}{2}\int_0^tX_s^2ds+%
\int_0^t|X_s|dW_s\right\}.  \label{E:mart}
\end{equation}
The next standard assertion states that Gaussian stochastic volaitility
models create a risk-neutral environment.

\begin{lemma}
\label{L:martin} \label{L:environ} Under the restrictions on the volatility
process $X$ in (\ref{mart}), the discounted asset price process $\widetilde{S%
}$ is a $\{\mathcal{F}_t\}$-martingale.
\end{lemma}

\emph{Proof.} Lemma \ref{L:environ} is standard. Using It$\hat{\mathrm{o}}$%
's formula, we first show that the process $\widetilde{S}$ in (\ref{E:mart})
is a positive local martingale. Hence, it is a supermartingale by Fatou's
lemma. The conditional distribution of the stochastic integral $%
\int_{0}^{t}|X_{s}|dW_{s}$ given $|X|$ is normal with mean zero and variance 
$\int_{0}^{t}X_{s}^{2}ds$. Hence by conditioning on $|X|$ and using the
normal MGF, we can prove that $\mathbb{E}[\widetilde{S}_{t}]=s_{0}$ for all $%
t$. However, a supermartingale with a constant expectation is a martingale.
This completes the proof of Lemma \ref{L:martin}. \hspace*{\fill}$\square 
\vspace{0.1in}$

Let us define the call pricing function in the stochastic volatility model
described by (\ref{mart}) by $C(T,K)=e^{-rt}\mathbb{E}\left[(S_T-K)^{+}%
\right], $ where $T$ is the maturity and $K$ is the strike price, and recall
that $S_0=s_0$ a.s.

If the initial condition for the volatility process $X$ is constant, then
the call pricing function $C$ is free of static arbitrage. On the other
hand, if the initial condition $X_0$ is random, then there may be static
arbitrage in the function $C$. We refer the reader to Definition 1.2 in \cite%
{Ro} for more details concerning static arbitrage.

Let us fix the maturity $T$, and consider $C$ as the function $K\mapsto C(K)$
of only the strike price K. The Black-Scholes implied volatility associated
with the pricing function $C$ will be denoted by $I$. More information on
the implied volatility can be found in \cite{Ga,G}.

The asymptotic behavior of the implied volatility for stochastic volatility
models, in which the asset price density satisfies 
\begin{equation}
D_T(x)=r_1x^{-r_3}\exp\{r_2\sqrt{\log x}\}(\log x)^{r_4}(1+O\left((\log x)^{-%
\frac{1}{2}}\right), \quad x\rightarrow\infty,  \label{E:co}
\end{equation}
where $r_1> 0$, $r_2\ge 0$, $r_3> 2$, and $r_4\in\mathbb{R}$, was
characterized in \cite{GV}. However, there is an error in the expression for
the fourth coefficient in formula (91) in \cite{GV}. The correct statement
is as follows.

\begin{theorem}
\label{T:GV} Suppose condition (\ref{E:co}) holds. Then the following
asymptotic formula is valid for the implied volatility: 
\begin{align}
&I(K)=\frac{\sqrt{2}}{\sqrt{T}}(\sqrt{r_3-1}-\sqrt{r_3-2})\sqrt{\log \frac{K%
}{s_0e^{rT}}}+\frac{r_2}{\sqrt{2T}}\left(\frac{1}{\sqrt{r_3-2}} -\frac{1}{%
\sqrt{r_3-1}}\right)  \notag \\
&+\frac{2r_4+1}{2\sqrt{2T}}\left(\frac{1}{\sqrt{r_3-2}}-\frac{1}{\sqrt{r_3-1}%
}\right) \frac{\log\log\frac{K}{s_0e^{rT}}}{\sqrt{\log\frac{K}{s_0e^{rT}}}} 
\notag \\
&+\left[\frac{1}{\sqrt{2T}}\left(\frac{1}{\sqrt{r_3-1}}-\frac{1}{\sqrt{r_3-2}%
}\right) \log\frac{\sqrt{r_3-1}-\sqrt{r_3-2}}{2\sqrt{\pi}r_1} +\frac{r_2^2}{4%
\sqrt{2T}}\left(\frac{1}{(r_3-2)^{\frac{3}{2}}}-\frac{1}{(r_3-1)^{\frac{3}{2}%
}}\right)\right]  \notag \\
&\times\frac{1}{\sqrt{\log\frac{K}{s_0e^{rT}}}} +\frac{r_2(2r_4+1)}{4\sqrt{2T%
}}\left(\frac{1}{(r_3-2)^{\frac{3}{2}}}-\frac{1}{(r_3-1)^{\frac{3}{2}}}%
\right)\frac{\log\log\frac{K}{s_0e^{rT}}}{\log\frac{K}{s_0e^{rT}}} +O\left(%
\frac{1}{\log\frac{K}{s_0e^{rT}}}\right)  \label{E:rr2}
\end{align}
as $K\rightarrow\infty$.
\end{theorem}

The proof of Theorem \ref{T:GV} is exactly the same as that of Theorem 17 in 
\cite{GV}.

The next assertions (Theorems \ref{T:is} and \ref{T:isos}) are the main
results of the present paper. They provide asymptotic formulas for the
implied volatility in the stochastic volatility model given by (\ref{mart}).

\begin{theorem}
\label{T:is} Suppose $\delta> 0$. Then the following formula holds for the
implied volatility $K\mapsto I(K)$: 
\begin{align}
& I(K)=M_{1}\sqrt{\log \frac{K}{s_{0}e^{rT}}}+M_{2}+M_{3}\frac{\log \log 
\frac{K}{s_{0}e^{rT}}}{\sqrt{\log \frac{K}{s_{0}e^{rT}}}}  \notag \\
& \quad +M_{4}\frac{1}{\sqrt{\log \frac{K}{s_{0}e^{rT}}}}+M_{5}\frac{\log
\log \frac{K}{s_{0}e^{rT}}}{\log \frac{K}{s_{0}e^{rT}}}+O\left( \frac{1}{%
\log \frac{K}{s_{0}e^{rT}}}\right)  \label{E:rur2}
\end{align}%
as $K\rightarrow \infty $, where 
\begin{align}
& M_{1}=\frac{\sqrt{2}}{\sqrt{T}}\left( \frac{\sqrt{\lambda _{1}}}{\sqrt{%
4+\lambda _{1}}+2}\right) ^{\frac{1}{2}},\quad M_{2}=\frac{\sqrt{\delta }}{%
\sqrt{T}}\left( \frac{\lambda _{1}}{\sqrt{4+\lambda _{1}}(\sqrt{4+\lambda
_{1}}+2)}\right) ^{\frac{1}{2}},  \label{E:n11} \\
& M_{3}=\frac{n_{1}-1}{4\sqrt{2T}}\left( \frac{\lambda _{1}^{\frac{3}{2}}}{%
\sqrt{4+\lambda _{1}}+2}\right) ^{\frac{1}{2}},  \notag \\
& M_{4}=-\frac{1}{\sqrt{2T}}\left( \frac{\lambda _{1}^{\frac{3}{2}}}{\sqrt{%
4+\lambda _{1}}+2}\right) ^{\frac{1}{2}}\log \left[ \frac{1}{2\sqrt{\pi }%
V(1,0)}\left( \frac{\lambda _{1}^{\frac{1}{2}}}{\sqrt{4+\lambda _{1}}+2}%
\right) ^{\frac{1}{2}}\right]  \notag \\
& \quad +\frac{\sqrt{2}\delta }{4\sqrt{T}}\left( \frac{\sqrt{\lambda _{1}}(%
\sqrt{4+\lambda _{1}}-2)}{4+\lambda _{1}}\right) ^{\frac{1}{2}}(\sqrt{%
4+\lambda _{1}}+1),  \notag \\
& M_{5}=\frac{(n_{1}-1)\sqrt{\delta }}{8\sqrt{T}}\left( \frac{\lambda _{1}(%
\sqrt{4+\lambda _{1}}-2)}{\sqrt{4+\lambda _{1}}}\right) ^{\frac{1}{2}}(\sqrt{%
4+\lambda _{1}}+1),  \notag
\end{align}%
where $V\left( 1,0\right) $ is the value of $V$ in (\ref{E:oo}) with $%
s_{0}=1 $ and $r=0$.
\end{theorem}

\textit{Proof.} Set $r_1=V(1,0)$, $r_2=\frac{\sqrt{2\delta}}{\lambda_1^{%
\frac{1}{4}}(4+\lambda_1)^{\frac{1}{4}}}$, $r_3=\frac{3}{2}+\frac{\sqrt{%
4+\lambda_1}}{2\sqrt{\lambda_1}}$, and $r_4=\frac{n_1-3}{4}$. Next, using (%
\ref{E:vd}) and (\ref{E:rr2}), and making straightforward simplifications,
we get 
\begin{align*}
&M_1=\frac{2\lambda_1^{\frac{1}{4}}}{\sqrt{T}\left[(\sqrt{4+\lambda_1}+\sqrt{%
\lambda_1})^{\frac{1}{2}} +(\sqrt{4+\lambda_1}-\sqrt{\lambda_1})^{\frac{1}{2}%
}\right]}, \\
&M_2=\frac{\sqrt{2\delta\lambda_1}}{(4+\lambda_1)^{\frac{1}{4}}\sqrt{T}\left[%
(\sqrt{4+\lambda_1}+\sqrt{\lambda_1})^{\frac{1}{2}} +(\sqrt{4+\lambda_1}-%
\sqrt{\lambda_1})^{\frac{1}{2}}\right]}, \\
&M_3=\frac{(n_1-1)\lambda_1^{\frac{3}{4}}}{4\sqrt{T}\left[(\sqrt{4+\lambda_1}%
+\sqrt{\lambda_1})^{\frac{1}{2}} +(\sqrt{4+\lambda_1}-\sqrt{\lambda_1})^{%
\frac{1}{2}}\right]}, \\
&M_4=-\frac{\lambda_1^{\frac{3}{4}}}{\sqrt{T}\left[(\sqrt{4+\lambda_1}+\sqrt{%
\lambda_1})^{\frac{1}{2}} +(\sqrt{4+\lambda_1}-\sqrt{\lambda_1})^{\frac{1}{2}%
}\right]} \\
&\quad\times\log\frac{\lambda_1^{\frac{1}{4}}}{\sqrt{2\pi}V(1,0)\left[(\sqrt{%
4+\lambda_1}+\sqrt{\lambda_1})^{\frac{1}{2}} +(\sqrt{4+\lambda_1}-\sqrt{%
\lambda_1})^{\frac{1}{2}}\right]} \\
&\quad+\frac{\delta\lambda_1^{\frac{1}{4}}}{8\sqrt{T(4+\lambda_1)}}\left[(%
\sqrt{4+\lambda_1}+\sqrt{\lambda_1}) ^{\frac{3}{2}}-(\sqrt{4+\lambda_1}-%
\sqrt{\lambda_1})^{\frac{3}{2}}\right], \\
&M_5=\frac{\sqrt{2\lambda_1\delta}(n_1-1)}{32\sqrt{T}(4+\lambda_1)^{\frac{1}{%
4}}}\left[(\sqrt{4+\lambda_1}+\sqrt{\lambda_1})^{\frac{3}{2}}-(\sqrt{%
4+\lambda_1}-\sqrt{\lambda_1})^{\frac{3}{2}}\right].
\end{align*}

Finally, by taking into account the equalitites 
\begin{align*}
& (\sqrt{4+\lambda _{1}}+\sqrt{\lambda _{1}})^{\frac{1}{2}}+(\sqrt{4+\lambda
_{1}}-\sqrt{\lambda _{1}})^{\frac{1}{2}}=\sqrt{2}(\sqrt{4+\lambda _{1}}+2)^{%
\frac{1}{2}}, \\
& (\sqrt{4+\lambda _{1}}+\sqrt{\lambda _{1}})^{\frac{1}{2}}-(\sqrt{4+\lambda
_{1}}-\sqrt{\lambda _{1}})^{\frac{1}{2}}=\sqrt{2}(\sqrt{4+\lambda _{1}}-2)^{%
\frac{1}{2}}, \\
& (\sqrt{4+\lambda _{1}}+\sqrt{\lambda _{1}})^{\frac{3}{2}}-(\sqrt{4+\lambda
_{1}}-\sqrt{\lambda _{1}})^{\frac{3}{2}}=2^{\frac{3}{2}}(\sqrt{4+\lambda _{1}%
}-2)^{\frac{1}{2}}(\sqrt{4+\lambda _{1}}+1),
\end{align*}%
we obtain the formulas for the coefficients in Theorem \ref{T:is}. \hspace*{%
\fill}$\square \vspace{0.1in}$

The constant $V(1,0)$, given by (\ref{E:oo}), depends on all the Karhunen-Lo%
\`{e}ve parameters. However, this constant appears for the first time in the
fourth term of the asymptotic expansion in (\ref{E:rr2}). By keeping only
three terms in (\ref{E:rur2}), we obtain an asymptotic formula for the
implied volatility, in which the coefficients do not depend on $V$. However,
now we have the error term of the following form: $O\left(\left(\log\frac{K}{%
s_0e^{rT}}\right)^{-\frac{1}{2}}\right)$.

We will next suppose that the volatility is a centered Gaussian process, and
study the wing behavior of the implied volatility in such a case. According
to formula (\ref{E:vdo}), we can take $r_1=U(1,0)$, $r_2=0$, $r_3=\frac{3}{2}%
+\frac{\sqrt{4+\lambda_1}}{2\sqrt{\lambda_1}}$, and $r_4=\frac{n_1-2}{2}$.
Here $U(1,0)$ is defined by (\ref{E:oi}). Then, using Theorems \ref{T:ttt}
and \ref{T:GV}, and reasoning as in the proof of Theorem \ref{T:is}, we
obtain the following assertion.

\begin{theorem}
\label{T:isos} Suppose the volatility is modeled by a centered Gaussian
process. Then 
\begin{align*}
&I(K)=L_1\sqrt{\log \frac{K}{s_0e^{rT}}}+L_2 \frac{\log\log\frac{K}{s_0e^{rT}%
}}{\sqrt{\log\frac{K}{s_0e^{rT}}}} +L_3\frac{1}{\sqrt{\log\frac{K}{s_0e^{rT}}%
}} +O\left(\frac{1}{\log\frac{K}{s_0e^{rT}}}\right)
\end{align*}
as $K\rightarrow\infty$, where 
\begin{align*}
&L_1=\frac{\sqrt{2}}{\sqrt{T}}\left(\frac{\sqrt{\lambda_1}}{\sqrt{4+\lambda_1%
}+2}\right)^{\frac{1}{2}},\quad L_2=\frac{n_1-1}{2\sqrt{2T}}\left(\frac{%
\lambda_1^{\frac{3}{2}}}{\sqrt{4+\lambda_1}+2}\right)^{\frac{1}{2}}, \\
&L_3=-\frac{1}{\sqrt{2T}}\left(\frac{\lambda_1^{\frac{3}{2}}}{\sqrt{%
4+\lambda_1}+2}\right)^{\frac{1}{2}} \log\left[\frac{1}{2\sqrt{\pi}U(1,0)}%
\left(\frac{\lambda_1^{\frac{1}{2}}}{\sqrt{4+\lambda_1}+2}\right)^{\frac{1}{2%
}}\right].
\end{align*}
\end{theorem}

\begin{remark}
\label{R:symmetric} \textrm{\ Since the processes $X$ and $W$ in (\ref{mart}%
) are independent, the model in (\ref{mart}) belongs to the class of the
so-called symmetric models (see Section 9.8 in \cite{G}). It is known that
for a symmetric model, 
\begin{equation}
I(K)=I\left( \frac{\left( s_{0}e^{rT}\right) ^{2}}{K}\right) \quad 
\mbox{for
all}\quad K>0.  \label{E:ee}
\end{equation}%
It is clear that, using (\ref{E:ee}) and Theorem \ref{T:is}, we can
characterize the left-wing asymptotic behavior of the implied volatility in
the case of a noncentered Gaussian volatility. Similarly, (\ref{E:ee}) and
Theorem \ref{T:isos} can be used in the case of a centered Gaussian
volatility. }
\end{remark}

\section{Implied volatility in the uncorrelated Stein-Stein model}

\label{S:uSS} The classical Stein-Stein model is an important special
example of a Gaussian stochastic volatility model. The Stein-Stein model was
introduced in \cite{SS}. The volatility in the uncorrelated Stein-Stein
model is the absolute value of an Ornstein-Uhlenbeck process with a constant
initial condition $m_0$. In this section, we also consider a generalization
of the Stein-Stein model, in which the initial condition for the volatility
process is a random variable $X_0$. Of our interest in the present section
is a Gaussian stochastic volatility model with the process $X$ satisfying
the equation $dX_t=q(m-X_t)dt+\sigma dZ_t$. Here $q> 0$, $m\ge 0$, and $%
\sigma> 0$. It will be assumed that the initial condition $X_0$ is a
Gaussian random variable with mean $m_0$ and variance $\sigma_0^2$,
independent of the process $Z$. It is known that 
\begin{equation}
X_t=e^{-qt}X_0+(1-e^{-qt})m+\sigma e^{-qt}\int_0^te^{qu}dZ_u,\quad t\ge 0.
\label{E:ss}
\end{equation}
If $\sigma_0=0$, then the initial condition is equal to the constant $m_0$.
The mean function of the process $X$ is given by 
\begin{equation}
m(t)=e^{-qt}m_0+(1-e^{-qt})m,  \label{E:eq}
\end{equation}
and its covariance function is as follows: 
\begin{equation*}
Q(t,s)=e^{-q(t+s)}\left\{\sigma_0^2+\frac{\sigma^2}{2q}\left(e^{2q%
\min(t,s)}-1\right)\right\}.
\end{equation*}
Therefore, the following formula holds for the variance function: 
\begin{equation*}
\sigma^2_t=\frac{\sigma^2}{2q}+e^{-2qt}\left(\sigma_0^2-\frac{\sigma^2}{2q}%
\right),
\end{equation*}
and hence, if $\sigma_0^2=\frac{\sigma^2}{2q}$, then the process $X_t-m(t)$, 
$t\in[0,T]$, is centered and stationary. In this case, the covariance
function is given by 
\begin{equation*}
Q(t,s)=\frac{\sigma^2}{2q}e^{-q|t-s|}.
\end{equation*}

The Karhunen-Lo\`{e}ve expansion of the Ornstein-Uhlenbeck process is known
explicitly (see \cite{CP}). Denote by $w_n$ the increasingly sorted sequence
of the positive solutions to the equation 
\begin{equation}
\sigma^2w\cos(wT)+(q\sigma^2-w^2\sigma_0^2-q^2\sigma_0^2)\sin(wT)=0.
\label{E:eqin1}
\end{equation}
If $\sigma_0=0$, then the equation in (\ref{E:eqin1}) becomes 
\begin{equation}
w\cos(wT)+q\sin(wT)=0.  \label{E:eqin2}
\end{equation}
For the OU process in (\ref{E:ss}) with $\sigma_0\neq 0$, we have $n_k=1$
for all $k\ge 1$; 
\begin{equation}
\lambda_n=\frac{\sigma^2}{w_n^2+q^2}  \label{E:eqin}
\end{equation}
for all $n\ge 1$; and 
\begin{equation}
e_n(t)=K_n[\sigma_0^2w_n\cos(w_nt)+(\sigma^2-q\sigma_0^2)\sin(w_nt)]
\label{E:eqin3}
\end{equation}
for all $n\ge 1$ and $t\in[0,T]$. The constant $K_n$ in (\ref{E:eqin3}) is
determined from 
\begin{align}
\frac{1}{K_n^2}&=\frac{1}{2w_n}\sigma_0^2(\sigma^2-q\sigma_0^2)(1-%
\cos(2w_nT)) +\frac{1}{2}\sigma_0^4w_n^2\left(T+\frac{1}{2w_n}%
\sin(2w_nT)\right)  \notag \\
&\quad+\frac{1}{2}(\sigma^2-q\sigma_0^2)^2\left(T-\frac{1}{2w_n}%
\sin(2w_nT)\right)  \label{E:K}
\end{align}
for all $n\ge 1$. On the other hand, if $\sigma_0=0$, then $\lambda_n$ is
given by (\ref{E:eqin}), while the functions $e_n$ are defined by 
\begin{equation}
e_n(t)=\frac{1}{\sqrt{\frac{T}{2}-\frac{\sin(2w_nT)}{4w_n}}}\sin(w_nt)
\label{E:eqin4}
\end{equation}
for all $n\ge 1$ and $t\in[0,T]$.

By the Karhunen-Lo\`{e}ve theorem, the Ornstein-Uhlenbeck process $X$ in (%
\ref{E:ss}) can be represented as follows: 
\begin{equation*}
X_t=e^{-qt}m_0+(1-e^{-qt})m+\sum_{n=1}^{\infty}\sqrt{\lambda_n}e_n(t)Z_n
\end{equation*}
where $\{Z_n\}_{n\ge 1}$ is an i.i.d. sequence of standard normal variables.
The eigenvalues $\lambda_n$, $n\ge 1$, and the eigenfunctions $e_n$, $n\ge 1$%
, are given by (\ref{E:eqin}) and (\ref{E:eqin3}) if $\sigma_0\neq 0$, and
by (\ref{E:eqin}) and (\ref{E:eqin4}) if $\sigma_0=0$. Recall that the
numbers $w_n$, $n\ge 1$, in (\ref{E:eqin}) are solutions to the equation in (%
\ref{E:eqin1}) if $\sigma_0\neq 0$, and to the equation in (\ref{E:eqin2})
if $\sigma_0=0$. We refer the interested reader to \cite{CP} for more
details.

Our next goal is to discuss the constants in the asymptotic formulas for the
implied volatility at extreme strikes in the Stein-Stein model. Since $%
n_{1}=1$ for any OU process, the third and fifth terms in the expansion of
Theorem \ref{T:is} are zero, and with the exception of the term $V\left(
1,0\right) $ in $M_{4}$, the only parameters needed to compute the
above-mentioned constants are $\lambda _{1}$ and $\delta _{1}$. If $\sigma
_{0}\neq 0$, then we have 
\begin{equation}
\lambda _{1}=\frac{\sigma ^{2}}{w_{1}^{2}+q^{2}},  \label{E:lambda}
\end{equation}%
where $w_{1}$ is the smallest strictly positive solution to the equation in (%
\ref{E:eqin1}).

The next assertion provides explicit formulas for the number $%
\delta_1=\int_0^Tm(t)e_1(t)dt. $

\begin{lemma}
\label{L:lemma} (i)\,\,For the generalized uncorrelated Stein-Stein model
with $\sigma_0\neq 0$, 
\begin{align}
\delta_1&=\frac{K_1m(\sigma^2-q\sigma_0^2)(1-\cos(w_1T))}{w_1}%
+K_1\sigma_0^2\sin(w_1T)[(m_0-m)e^{-qT}+m]  \notag \\
&\quad+K_1\sigma^2(m_0-m) \frac{w_1[1-e^{-qT}\cos(w_1T)]-qe^{-qT}\sin(w_1T)}{%
q^2+w_1^2},  \label{E:tsh1}
\end{align}
where the constant $K_1$ is determined from (\ref{E:K}) with $n=1$. The
symbol $w_1$ in (\ref{E:tsh1}) stands for the smallest strictly positive
solution to (\ref{E:eqin1}). \newline
\newline
(ii)\,\,For the uncorrelated Stein-Stein model with $X_0=m_0\,\,\mathbb{P}$%
-almost surely, 
\begin{align}
&\delta_1=\frac{mq^2(1-\cos(w_1T))+w_1^2(m_0-m\cos(w_1T))} {w_1(q^2+w_1^2)%
\sqrt{\frac{T}{2}-\frac{\sin(2w_1T)}{4w_1}}}.  \label{E:tsh}
\end{align}
\end{lemma}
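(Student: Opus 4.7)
The proof is a direct computation: substitute the eigenfunction formulas (\ref{E:eqin3}) or (\ref{E:eqin4}) for $e_1$ and the mean function (\ref{E:eq}) for $m(t)$ into $\delta_1=\int_0^T m(t)e_1(t)\,dt$, reduce to a short list of elementary integrals, and rearrange. First, I would rewrite the mean as $m(t)=m+(m_0-m)e^{-qt}$, which cleanly separates the constant-drift piece from the transient initial-condition piece.

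For part (i), expanding the product $m(t)e_1(t)$ produces four integrals, each a prefactor times one of
\begin{equation*}
\int_0^T \cos(w_1 t)\,dt,\quad \int_0^T\sin(w_1 t)\,dt,\quad \int_0^T e^{-qt}\cos(w_1 t)\,dt,\quad \int_0^T e^{-qt}\sin(w_1 t)\,dt.
\end{equation*}
The first two are immediate, and the latter two follow from the standard antiderivatives of $e^{-qt}\cos(w_1 t)$ and $e^{-qt}\sin(w_1 t)$, which produce denominators $q^2+w_1^2$ and numerators of the shape $w_1[1-e^{-qT}\cos(w_1T)]\pm qe^{-qT}\sin(w_1T)$. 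After multiplying back by the coefficients $K_1\sigma_0^2 w_1$, $K_1(\sigma^2-q\sigma_0^2)$, and separating the $m$-part from the $(m_0-m)$-part, one regroups: the two $m$-proportional contributions supply the first summand together with the $m$-piece of the second summand of (\ref{E:tsh1}); the two $(m_0-m)$-proportional contributions combine, after clearing $q^2+w_1^2$ and simple cancellations (note that the cross terms $\sigma_0^2 q w_1 a$ and $-q\sigma_0^2 w_1 a$ cancel identically), into the remaining parts of the second summand plus the third summand. This rearrangement is a purely algebraic identity and, notably, does not require invoking the eigenvalue equation (\ref{E:eqin1}).

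For part (ii) the situation is simpler because $e_1$ is a pure sine with normalization constant $(T/2-\sin(2w_1T)/(4w_1))^{-1/2}$, so only two integrals appear. Here, however, I would use the eigenfrequency equation (\ref{E:eqin2}), namely $w_1\cos(w_1T)+q\sin(w_1T)=0$, to collapse
\begin{equation*}
w_1\bigl[1-e^{-qT}\cos(w_1T)\bigr]-qe^{-qT}\sin(w_1T)=w_1-e^{-qT}\bigl[w_1\cos(w_1T)+q\sin(w_1T)\bigr]=w_1.
\end{equation*}
After combining over the common denominator $w_1(q^2+w_1^2)$ and expanding $m(1-\cos(w_1T))(q^2+w_1^2)+(m_0-m)w_1^2$, the $mw_1^2$ terms cancel and the remaining expression regroups as $mq^2(1-\cos(w_1T))+w_1^2(m_0-m\cos(w_1T))$, which is exactly (\ref{E:tsh}). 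The only real obstacle in either case is bookkeeping: tracking the bilinear expansion in (i) carefully enough to see the three asserted groupings emerge, and remembering to apply the defining trigonometric equation in (ii).
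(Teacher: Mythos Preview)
Your proposal is correct and follows essentially the same route as the paper: expand $m(t)e_1(t)$ into the four elementary integrals, evaluate them via standard antiderivatives, and regroup for part~(i); then use the eigenfrequency relation $w_1\cos(w_1T)+q\sin(w_1T)=0$ to collapse the exponential term for part~(ii). The only cosmetic difference is that the paper obtains part~(ii) by specializing formula~(\ref{E:tsh1}) to $\sigma_0=0$ rather than starting afresh from the sine eigenfunction~(\ref{E:eqin4}), but the computations are identical.
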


\emph{Proof.} Taking into account (\ref{E:eq}) and (\ref{E:eqin3}), we see
that 
\begin{align}
&\delta_1=b_1\int_0^T\cos(w_1t)dt+b_2\int_0^Te^{-qt}\cos(w_1t)dt  \notag \\
&\quad+b_3\int_0^T\sin(w_1t)dt +b_4\int_0^Te^{-qt}\sin(w_1t)dt,
\label{E:cos1}
\end{align}
where 
\begin{align}
&b_1=mK_1\sigma_0^2w_1,\quad b_2=(m_0-m)K_1\sigma_0^2w_1,  \notag \\
&b_3=mK_1(\sigma^2-q\sigma_0^2),\quad\mbox{and}\quad
b_4=(m_0-m)K_1(\sigma^2-q\sigma_0^2).  \label{E:coe}
\end{align}
It remains to evaluate the integrals in (\ref{E:cos1}). We have 
\begin{equation}
\int_0^T\cos(w_1t)dt=\frac{\sin(w_1T)}{w_1},  \label{E:i1}
\end{equation}
\begin{equation}
\int_0^Te^{-qt}\cos(w_1t)dt=\frac{q[1-e^{-qT}\cos(w_1T)]+w_1e^{-qT}\sin(w_1T)%
}{q^2+w_1^2},  \label{E:i2}
\end{equation}
\begin{equation}
\int_0^T\sin(w_1t)dt=\frac{1-\cos(w_1T)}{w_1},  \label{E:i3}
\end{equation}
and 
\begin{equation}
\int_0^Te^{-qt}\sin(w_1t)dt=\frac{w_1[1-e^{-qT}\cos(w_1T)]-qe^{-qT}\sin(w_1T)%
}{q^2+w_1^2}.  \label{E:i4}
\end{equation}
In the proof of (\ref{E:i2}) and (\ref{E:i4}), we use the integration by
parts formula twice. Now, taking into account formulas (\ref{E:cos1}-\ref%
{E:i4}) and making simplifications, we establish formula (\ref{E:tsh1}).

Next, suppose $\sigma _{0}=0$. Then (\ref{E:tsh1}) implies that 
\begin{align*}
\delta _{1}& =\frac{m}{\sqrt{\frac{T}{2}-\frac{\sin (2w_{1}T)}{4w_{1}}}}%
\frac{1-\cos (w_{1}T)}{w_{1}} \\
& \quad +\frac{m_{0}-m}{\sqrt{\frac{T}{2}-\frac{\sin (2w_{1}T)}{4w_{1}}}}%
\frac{w_{1}[1-e^{-qT}\cos (w_{1}T)]-qe^{-qT}\sin (w_{1}T)}{q^{2}+w_{1}^{2}},
\end{align*}%
where $w_{1}$ denotes the smallest strictly positive solution to (\ref%
{E:eqin2}). It is not hard to see, using the equality $w_{1}\cos
(w_{1}T)+q\sin (w_{1}T)=0$, that 
\begin{equation}
\delta _{1}=\frac{m}{\sqrt{\frac{T}{2}-\frac{\sin (2w_{1}T)}{4w_{1}}}}\frac{%
1-\cos (w_{1}T)}{w_{1}}+\frac{m_{0}-m}{\sqrt{\frac{T}{2}-\frac{\sin (2w_{1}T)%
}{4w_{1}}}}\frac{w_{1}}{q^{2}+w_{1}^{2}},  \label{E:i5}
\end{equation}%
and it is clear that (\ref{E:i5}) and (\ref{E:tsh}) are equivalent. This
completes the proof of Lemma \ref{L:lemma}. \hspace*{\fill}$\square \vspace{%
0.1in}$

\begin{remark}
\label{R:finn} \textrm{\ Since for the generalized Stein-Stein model with a
random initial condition we have $n_{1}=1$, one can use the asymptotic
formulas in Theorem \ref{T:is} with }$\mathrm{\mathrm{M}_{3}=M_{5}=0}$%
\textrm{$\mathrm{\ }$to characterize the wing-behavior of the implied
volatility. The dependence of the parameters $\lambda _{1}$ and $\delta _{1}$%
, appearing in those formulas, on the model parameters is described in (\ref%
{E:lambda}), (\ref{E:tsh1}), and (\ref{E:tsh}). Originally, sharp asymptotic
formulas for the implied volatility at extreme strikes in the uncorrelated
Stein-Stein model with $X_{0}=m_{0}$ were obtained in \cite{GS} (see also
Section 10.5 in \cite{G}). However, explicit expressions, obtained in \cite%
{GS} and \cite{G} for the coefficients in the asymptotic formulas for the
implied volatility in the Stein-Stein model, are significantly more
complicated than those found in the present paper. }
\end{remark}

\section{Numerical illustration\label{NUM}}

A basic calibration strategy when presented with asymptotic results such as
those given in this paper is to assume one can place oneself in the
corresponding regime, and then determine model parameters by reading
asymptotic coefficient off of market option prices. We now illustrate how
this strategy can produce positive results, and discuss its limitations,
when the top of the KL spectrum is simple ($n_{1}=1$). As noted in the
introduction, in this case, the third and fifth terms in the expansion are
null. The idea is to ignore the big $O$ term in the asymptotic (\ref{E:n11}%
), and calibrate parameters to the remaining coefficients. Denoting the
discounted log-moneyness $\log \left( S_{0}e^{rT}/K\right) $ by $k$ for
compactness of notation, we thus have, for $\left\vert k\right\vert $
sufficiently large,%
\begin{equation}
I\left( k\right) \simeq M_{1}\sqrt{\left\vert k\right\vert }+M_{2}+M_{4}%
\frac{1}{\sqrt{\left\vert k\right\vert }},  \label{E:onemore1}
\end{equation}%
for three constants $M_{1}$, $M_{2}$, and $M_{4}$, which can, in principle,
be read off of market data. By the explicit expressions for the first two
constants in (\ref{E:n11}) in terms of $\lambda _{1}$ and $\delta _{1}$, we
then express the latter in terms of $M_{1}$ and $M_{2}$ as 
\begin{eqnarray}
\lambda _{1} &=&\frac{64T^{2}M_{1}^{4}}{(4-T^{2}M_{1}^{4})^{2}},  \notag \\
\delta _{1} &=&\frac{4\sqrt{2T}M_{2}\sqrt{4+T^{2}M_{1}^{4}}}{4-T^{2}M_{1}^{4}%
}.  \label{calibration}
\end{eqnarray}%
Here we use (\ref{E:onemore1}). One notices that, conveniently, $\lambda
_{1} $ can be calibrated using only the coefficient $M_{1}$, while given $%
M_{1}$, $\delta _{1}$ is then proportional to $M_{2}$.

At this stage, one may simply conclude that the extreme strike asymptotics
given in the market are consistent with any Gaussian volatility model whose
top of eigenstructure is represented by the values computed in the above
expressions for $\lambda _{1}$ and $\delta _{1}$. However, practitioners
will prefer to determine a more specific model, perhaps by choosing a
classical parametric one, and using other non-asymptotic-calibration
techniques for estimating some of its parameters. The expressions in (\ref%
{calibration}) can then be used to pin down other parameters by calibration,
as long as one can relate the model's parameters to the pair $\left( \lambda
_{1},\delta _{1}\right) $ from the top of its KL spectrum, whether
analytically or numerically. The expression for $M_{4}$, given in (\ref{E:oo}%
) and (\ref{E:n11}), may be too complex to provide a reliable method for
calibrating parameters beyond the pair $\left( \lambda _{1},\delta
_{1}\right) $, but we will see below that the existence of the corresponding
term in the expansion, combined with a truncation of the formula for $M_{4}$%
, is very helpful for implementing the calibration based on (\ref%
{calibration}).

We provide illustrations of this strategy in two cases: the stationary
Stein-Stein model, where the KL expansion is known semi-explicitly, and the
Stein-Stein model's long-memory version, where the volatility is also known
as the fractional Ornstein-Uhlenbeck (fOU) model, and the KL expansion is
computed numerically. The data we use is also generated numerically: for
each model, we compute option prices and their corresponding implied
volatilities, by classical Monte-Carlo, given that the underlying pair of
stochastic processes is readily simulated. Specifically, in the Stein-Stein
(standard OU) case, $10^{6}$ paths are generated via Euler's method based on
discretizing the stochastic differential equation satisfied by $X$ started
from a r.v. sampled from $X$'s stationary distribution, and the explicit
expression for $\log S$ given $X$, also approximated via Euler with the same
time steps; $10^{3}$ time steps are used in $[0,T]$ for the various values
of $T$ we illustrate below ($1,2,3$ and $6$ months, measured in years).
Option prices are derived by computing average payoffs over the $10^{6}$
paths. The details are well known, and are omitted. In the fOU case, the
exact same methodology is used, except that one must specify the technique
used to simulate increments of the fBm process which drives $X$: we used the
circulant method, which is based on fBm's spectral properties, and was
proposed by A.T. Dieker in a 2002 thesis: see \cite{D, DMa}.

Given this simulated data, before embarking on the task of calibrating
parameters, to ensure that our methodology is relevant in practice, it is
important to discuss liquidity issues. It is known that the out-of-the-money
call options market is poorly liquid, implying that the large strike
asymptotics for call and IV prices are typically not visible in the data. We
concentrate instead on small strike asymptotics. There, depending on the
market segment, options with three-month maturity can be liquid with small
bid-ask spread for log moneyness $k$ as far down as $-1$ or even a bit
further. Options with six-month maturity with very small bid-ask spread can
be liquid as far down as $-1.5$. Convincing visual evidence of this can be
found in Figures 3 and 4 in \cite{GJ} which report 2011 data for SPX
options. We will also consider examples with one-month and two-month
maturity, where liquidity will be assumed to exist down to $k=-0.8$, based
on corresponding evidence in the same figures. We will illustrate
calibration using intervals of relatively short length which start on the
left side within these observed liquidity ranges. Beyond these lower bounds,
liquidity is insufficient to measure IV. In these ranges of $k$, the
constant term $M_{2}$ and the expressions $\sqrt{-k}$ and $1/\sqrt{-k}$ are
of similar magnitude, which may call into question whether the expansion can
be of any use in the range where liquidity exists. However, one may expect
that the KL expansion converges fast enough that the three terms $M_{1}\sqrt{%
-k}$, $M_{2}$, and $M_{4}/\sqrt{-k}$ are of different orders because the
three constants are. This turns out to be the case in the two example
classes we consider, so that our three-term expansion allows us to calibrate 
$\lambda _{1}$ and $\delta _{1}$ to $M_{1}$ and $M_{2}$ as in (\ref%
{calibration}). This works very well in practice, as our examples below now
show.\bigskip

We begin with the stationary uncorrelated Stein-Stein model with constant
mean-reversion level $m$, rate of mean reversion $q$, and so called vol-vol
parameter $\sigma $. Referring to the notation in Section \ref{S:uSS}, since
now $X$ is stationary, we have $m_{0}=m$ and $\sigma _{0}^{2}=\sigma
^{2}/\left( 2q\right) $, and the constant $K_{1}$, which is determined from
equation (\ref{E:K}), will play an important role for us. The systems of
equations needed to perform calibration here have a somewhat triangular
structure. According to Section \ref{S:uSS}, if one were to calibrate $q$,
access to $\delta _{1}$ would be needed, if one were to rely on independent
knowledge of the level of mean reversion $m$. Specifically, one would solve
the following system%
\begin{eqnarray}
q\sin (wT)+w\cos (wT) &=&0  \label{qw} \\
C_{1}\left( \sin (wT)+\frac{q}{w}(1-\cos (wT))\right) &=&\frac{\delta _{1}}{m%
}  \notag
\end{eqnarray}%
where $C_{1}=K_{1}\sigma _{0}^{2}$. As noted via (\ref{E:K}), unfortunately
the constant $C_{1}$ also depends on $\left( q,w\right) $ in the following
non-trivial way:%
\begin{equation}
\frac{1}{C_{1}^{2}}=\frac{q}{2}(1-\cos (2wT))+\frac{w^{2}}{2}\left( T+\frac{%
\sin (2wT)}{2w}\right) +\frac{q^{2}}{2}\left( T-\frac{\sin (2wT)}{2w}\right)
.  \label{C1}
\end{equation}%
When $q$ is not fixed, the task of determining which value of $w$ represents
the minimal solution of the first equation above, given the large number of
solutions to the above system, is difficult. We did not pursue this avenue
further for this reason. Instead, we will assume that $q$, which determines
the rate of mean reversion, is known, and we will calibrate the pair $\left(
m,\sigma \right) $.

The equations for finding $\left( \sigma ,m\right) $ given prior knowledge
of $q$, and given measurement of $M_{1}$ and $M_{2}$ which imply values of $%
\left( \lambda _{1},\delta _{1}\right) $ via (\ref{calibration}), are much
simpler. Indeed, since $q$ is assumed given, the base frequency $w$ is
computed easily as the smallest positive solution of (\ref{E:eqin2}). Then
according to equation (\ref{E:lambda}) and part (ii) of Lemma \ref{L:lemma},
with $C_{1}$ given by (\ref{C1}), we obtain immediately%
\begin{eqnarray}
\sigma ^{2} &=&\lambda _{1}\left( w^{2}+q^{2}\right) ;  \label{calibsigma} \\
m &=&\frac{\delta _{1}}{C_{1}\left( \sin (wT)+\frac{q}{w}(1-\cos
(wT))\right) }.  \label{calibm}
\end{eqnarray}

Any fitting method can in principle be used to estimate the coefficients $%
M_{1}$, $M_{2}$, and $M_{4}$ when working from a data-based IV curve.
However, it turns out that, in the ranges of liquidity which we described
above, any estimation will contain a certain amount of instability. We now
give the details of an iterative technique which increases the stability of
the method dramatically by exploiting the fact that $M_{4}$ is significantly
smaller than $M_{1}$ and $M_{2}$.

We use simulated IV data for the call option with $m=0.2$ (signifying a
typical mean level of volatility of $20\%$), $q=7$ (fast mean reversion,
every eight weeks or so), and $\sigma =1.2$ (high level of volatility
uncertainty). How to estimate $M_{1}$ from the data is not unambiguous. We
adopt a least-squares method, on an interval of $k$-values of fixed length;
after experimentation, as a rule of thumb, an interval of length $0.10$ or $%
0.20$ provides a good balance between providing a local estimate and drawing
on enough datapoints. One should start the interval as far to the left as
possible while avoiding any range with insufficient liquidity in practice.
As a guide to assess this liquidity, we use the study reported in \cite[%
Section 5.4]{GJ}, which depends heavily on the option maturity, as we
mentioned in this section. The following are intervals employed.\bigskip

\begin{center}
\hspace*{-0.5in}%
\begin{tabular}{|l|l|l|l|l|}
\hline
Maturity $T$ & $1/12$ (1 mo.) & $1/12$ (1 mo.) & $1/6$ (2 mos.) & $1/6$ (2
mos.) \\ \hline
Interval used & $[-0.8,-0.6]$ & $[-0.7,-0.6]$ & $[-0.8,-0.6]$ & $[-0.7,-0.6]$
\\ \hline
\end{tabular}%
\\[0pt]
\hspace*{-0.5in}%
\begin{tabular}{|l|l|l|l|l|}
\hline
Maturity $T$ & $0.25$ (3 mos.) & $0.25$ (3 mos.) & $0.5$ (6 mos.) & $0.5$ (6
mos.) \\ \hline
Interval used & $[-1.1,-0.9]$ & $[-1.0,-0.9]$ & $[-1.4,-1.2]$ & $[-1.3,-1.2]$
\\ \hline
\end{tabular}%
\bigskip
\end{center}

Graphs of the data versus the asymptotic curve in (\ref{E:onemore1}),
showing excellent agreement, are given in Fig. 1a thru 1d, though a
case-by-case need for an analysis of the trade-off between this agreement
and the liquidity-dictated calibration intervals, is apparent as one
considers various possible maturities (note the difference in ranges for
log-moneyness $k$ on the horizontal axes).

Our stabilized calibration method starts with a least-squares measurement of 
$M_{1}$ and $M_{2}$ based on the asymptotic curve with only the first two
terms. The value of $M_{1}$ is used to calibrate $\sigma $. A guess is
expressed for $m$ to initiate the procedure; in our examples we use $m=0.22,$
to signify an educated guess which misses the mark by $10\%$, as would be
reasonable to expect when using a proxy such as the VIX to visually estimate
this so-called mean reversion level. The next step uses the values of $%
\sigma $ and $m$ previously determined, along with the known value $q$, to
compute a large number of terms in the KL expansion of the OU process (we
use 500 terms), and uses those terms to compute $M_{4}$ via the expressions
in (\ref{E:oo}) and (\ref{E:n11}). The value of $M_{4}$ just obtained is
also used to refine the non-linear least-squares estimation of $M_{1}$ and $%
M_{2}$ based on the three-term asymptotic function in (\ref{E:onemore1})
where the term $M_{4}/\sqrt{\left\vert k\right\vert }$ is assumed known. The
third step then calibrates $\sigma $ and $m$ based on the new values of $%
M_{1}$ and $M_{2}$, and then recomputes $M_{4}$ using the same procedure as
in the second step, which allows a new estimation of $M_{1}$ and $M_{2}$
using the full asymptotics including the just-updated term $M_{4}/\sqrt{%
\left\vert k\right\vert }$. One then repeats the third step iteratively,
until one notices a stabilization. In the examples we report, the method
either stabilizes on a single set of values for the pair $\left( \sigma
,m\right) $, or loops between two very close sets of values; this occurs
after 6 or 7 steps. We think this needed number of repeats, and the
precision obtained in the end, are typical, because they are functions of
the small magnitude of $M_{4}$ compared to $M_{1}$ (order of $2\%$ to $10\%$
for our maturities from one month to six months), this $M_{4}$ being
considered as a nuisance term whose rough estimation helps sharpen the
estimation of the other two constants significantly. Summarizing the
procedure, we have:

\begin{enumerate}
\item[(0)] Assume $q$ is known. Compute $w$ as smallest frequency solving (%
\ref{qw}).
\end{enumerate}

\begin{enumerate}
\item Use two-term asymptotics to estimate $M_{1}$ and $M_{2}$, calibrate $%
\sigma $ to $M_{1}$ via (\ref{calibration}) and (\ref{calibm}). Initialize $%
m $ using a good guess for rate of mean reversion.

\item Use $\sigma $ and $m$ from step 1 (and $q$ from step 0) to compute a
large number (e.g. $500$) of terms in the KL expansion of $X$. Use truncated
theoretical formula in (\ref{E:oo}) and (\ref{E:n11}) to compute $K_{4}$
from this expansion. Re-estimate $M_{1}$ and $M_{2}$ by using full
three-term asymptotics (\ref{E:onemore1}) assuming $M_{4}/\sqrt{\left\vert
k\right\vert }$ is known.

\item Calibrate $\sigma $ from the new $M_{1}$ and $m$ from the new pair $%
\left( M_{1},M_{2}\right) $ via (\ref{calibration}), (\ref{calibm}), and (%
\ref{calibsigma}). Recompute the KL expansion of $X$ based on the new $%
\left( \sigma ,m\right) $, and recompute $K_{4}$ using the new KL expansion
in the theoretical formula. Re-estimate $M_{1}$ and $M_{2}$ by using full
three-term asymptotics (\ref{E:onemore1}) assuming $M_{4}/\sqrt{\left\vert
k\right\vert }$ is known using the new $M_{4}$.

\item Repeat step 3 iteratively until stabilization of $\left( \sigma
,m\right) $ occurs.
\end{enumerate}

We report our findings for the calibration of $\left( \sigma ,m\right) $ in
our 8 examples of interest in the following tables. The "true values" of $%
M_{1}$, $M_{2}$, and $M_{4}$ in these tables are those which are computed
from the Stein-Stein model with $\left( \sigma ,m,q\right) =\left(
1.2,0.2,7\right) $ via its KL elements; as explained above, only the
first-order KL eigen-elements are needed for $M_{1}$, $M_{2}$, while for $%
M_{4}$, we use the full theoretical formula in which we ignore the
eigen-elements after rank 500.\bigskip

\begin{center}
\begin{tabular}{ll}
$T=1/12$ (1 mo.) & Calibration over the interval $[-0.8,-0.6]$%
\end{tabular}

\begin{tabular}{|l|l|l|l|l|l|}
\hline
& $M_{1}$ & $M_{2}$ & $M_{4}$ & $\sigma $ & $m$ \\ \hline
True values & 0.7117 & 0.0706 & 0.0188 & 1.2 & 0.2 \\ \hline
Step 1 & 0.6875 & 0.1113 &  & 1.1196 & 0.22 \\ \hline
Step 2 & 0.6875 & 0.0777 & 0.0188 &  &  \\ \hline
Step 3 & 0.7145 & 0.0661 & 0.0183 & 1.2096 & 0.1873 \\ \hline
Step 4 & 0.7138 & 0.0673 & 0.0184 & 1.2072 & 0.1907 \\ \hline
Step 5 & 0.7140 & 0.0671 & 0.0184 & 1.2077 & 0.1900 \\ \hline
\end{tabular}%
\bigskip

\begin{tabular}{ll}
$T=1/12$ (1 mo.) & Calibration over the interval $[-0.7,-0.6]$%
\end{tabular}

\begin{tabular}{|l|l|l|l|l|l|}
\hline
& $M_{1}$ & $M_{2}$ & $M_{4}$ & $\sigma $ & $m$ \\ \hline
True values & 0.7117 & 0.0706 & 0.0188 & 1.2 & 0.2 \\ \hline
Step 1 & 0.6859 & 0.1126 &  & 1.1142 & 0.22 \\ \hline
Step 2 & 0.6859 & 0.0777 & 0.0187 & 1.2102 & 0.1872 \\ \hline
Step 3 & 0.7147 & 0.0661 & 0.0183 & 1.2102 & 0.1872 \\ \hline
Step 4 & 0.7141 & 0.0671 & 0.0184 & 1.2081 & 0.1901 \\ \hline
Step 5 & 0.7143 & 0.0668 & 0.0184 & 1.2086 & 0.1894 \\ \hline
\end{tabular}%
\bigskip

\begin{tabular}{ll}
$T=1/6$ (2 mos.) & Calibration over the interval $[-0.8,-0.6]$%
\end{tabular}

\begin{tabular}{|l|l|l|l|l|l|}
\hline
& $M_{1}$ & $M_{2}$ & $M_{4}$ & $\sigma $ & $m$ \\ \hline
True values & 0.5743 & 0.0704 & 0.0245 & 1.2 & 0.2 \\ \hline
Step 1 & 0.5370 & 0.1309 &  & 1.0490 & 0.22 \\ \hline
Step 2 & 0.5370 & 0.0775 & 0.0232 &  &  \\ \hline
Step 3 & 0.5705 & 0.0752 & 0.0251 & 1.1839 & 0.2134 \\ \hline
Step 4 & 0.5732 & 0.0706 & 0.0245 & 1.1953 & 0.2005 \\ \hline
Step 5 & 0.5723 & 0.0720 & 0.0247 & 1.1917 & 0.2046 \\ \hline
Step 6 & 0.5726 & 0.0716 & 0.0246 & 1.1929 & 0.2032 \\ \hline
Step 7 & 0.5725 & 0.0718 & 0.0246 & 1.1923 & 0.2039 \\ \hline
\end{tabular}%
\bigskip \newpage

\begin{tabular}{ll}
$T=1/6$ (2 mos.) & Calibration over the interval $[-0.7,-0.6]$%
\end{tabular}

\begin{tabular}{|l|l|l|l|l|l|}
\hline
& $M_{1}$ & $M_{2}$ & $M_{4}$ & $\sigma $ & $m$ \\ \hline
True values & 0.5743 & 0.0704 & 0.0245 & 1.2 & 0.2 \\ \hline
Step 1 & 0.5354 & 0.1322 &  & 1.0428 & 0.22 \\ \hline
Step 2 & 0.5354 & 0.0775 & 0.0231 &  &  \\ \hline
Step 3 & 0.5711 & 0.0748 & 0.0251 & 1.1866 & 0.2123 \\ \hline
Step 4 & 0.5742 & 0.0698 & 0.0244 & 1.1995 & 0.1982 \\ \hline
Step 5 & 0.5731 & 0.0715 & 0.0246 & 1.2011 & 0.1997 \\ \hline
Step 6 & 0.5734 & 0.0710 & 0.0246 & 1.1962 & 0.2017 \\ \hline
\end{tabular}%
\bigskip

\begin{tabular}{ll}
$T=1/4$ (3 mos.) & Calibration over the interval $[-1.1,-0.9]$%
\end{tabular}

\begin{tabular}{|l|l|l|l|l|l|}
\hline
& $M_{1}$ & $M_{2}$ & $M_{4}$ & $\sigma $ & $m$ \\ \hline
True values & 0.5001 & 0.0702 & 0.0295 & 1.2 & 0.2 \\ \hline
Step 1 & 0.4699 & 0.1299 &  & 1.0591 & 0.22 \\ \hline
Step 2 & 0.4699 & 0.0773 & 0.0279 &  &  \\ \hline
Step 3 & 0.4980 & 0.0740 & 0.0300 & 1.1896 & 0.2107 \\ \hline
Step 4 & 0.5001 & 0.0698 & 0.0294 & 1.1997 & 0.1987 \\ \hline
Step 5 & 0.4995 & 0.0710 & 0.0295 & 1.1969 & 0.2021 \\ \hline
Step 6 & 0.4996 & 0.0708 & 0.0295 & 1.1973 & 0.2016 \\ \hline
\end{tabular}%
\bigskip

\begin{tabular}{ll}
$T=1/4$ (3 mos.) & Calibration over the interval $[-1.0,-0.9]$%
\end{tabular}

\begin{tabular}{|l|l|l|l|l|l|}
\hline
& $M_{1}$ & $M_{2}$ & $M_{4}$ & $\sigma $ & $m$ \\ \hline
True values & 0.5001 & 0.0702 & 0.0295 & 1.2 & 0.2 \\ \hline
Step 1 & 0.4655 & 0.1342 &  & 1.0396 & 0.22 \\ \hline
Step 2 & 0.4655 & 0.0773 & 0.0275 &  &  \\ \hline
Step 3 & 0.4945 & 0.0777 & 0.0305 & 1.1732 & 0.2214 \\ \hline
Step 4 & 0.4977 & 0.0716 & 0.0295 & 1.1883 & 0.2039 \\ \hline
Step 5 & 0.4966 & 0.0736 & 0.0298 & 1.1832 & 0.2097 \\ \hline
Step 6 & 0.4969 & 0.0730 & 0.0297 & 1.1847 & 0.2080 \\ \hline
Step 7 & 0.4968 & 0.0732 & 0.0297 & 1.1842 & 0.2086 \\ \hline
\end{tabular}%
\bigskip

\begin{tabular}{ll}
$T=1/2$ (6 mos.) & Calibration over the interval $[-1.4,-1.2]$%
\end{tabular}

\begin{tabular}{|l|l|l|l|l|l|}
\hline
& $M_{1}$ & $M_{2}$ & $M_{4}$ & $\sigma $ & $m$ \\ \hline
True values & 0.3838 & 0.0695 & 0.0428 & 1.2 & 0.2 \\ \hline
Step 1 & 0.3521 & 0.1432 &  & 1.0094 & 0.22 \\ \hline
Step 2 & 0.3521 & 0.0765 & 0.0385 &  &  \\ \hline
Step 3 & 0.3817 & 0.0757 & 0.0442 & 1.1869 & 0.2178 \\ \hline
Step 4 & 0.3861 & 0.0657 & 0.0423 & 1.2144 & 0.1890 \\ \hline
Step 5 & 0.3846 & 0.0690 & 0.0429 & 1.2052 & 0.1986 \\ \hline
Step 6 & 0.3851 & 0.0679 & 0.0427 & 1.2081 & 0.1956 \\ \hline
Step 7 & 0.3849 & 0.0683 & 0.0428 & 1.2071 & 0.1966 \\ \hline
Step 8 & 0.3850 & 0.0681 & 0.0427 & 1.2076 & 0.1961 \\ \hline
\end{tabular}%
\bigskip \newpage

\begin{tabular}{ll}
$T=1/2$ (6 mos.) & Calibration over the interval $[-1.3,-1.2]$%
\end{tabular}

\begin{tabular}{|l|l|l|l|l|l|}
\hline
& $M_{1}$ & $M_{2}$ & $M_{4}$ & $\sigma $ & $m$ \\ \hline
True values & 0.3838 & 0.0695 & \multicolumn{1}{|l|}{0.0428} & 1.2 & 0.2 \\ 
\hline
Step 1 & 0.3493 & 0.1464 &  & 0.9934 & 0.22 \\ \hline
Step 2 & 0.3493 & 0.0765 & 0.0380 &  &  \\ \hline
Step 3 & 0.3797 & 0.0783 & 0.0446 & 1.1745 & 0.2255 \\ \hline
Step 4 & 0.3850 & 0.0665 & 0.0423 & 1.2075 & 0.1915 \\ \hline
Step 5 & 0.3832 & 0.0706 & 0.0430 & 1.1959 & 0.2034 \\ \hline
Step 6 & 0.3837 & 0.0694 & 0.0428 & 1.1994 & 0.1998 \\ \hline
Step 7 & 0.3836 & 0.0697 & 0.0429 & 1.1984 & 0.2008 \\ \hline
Step 8 & 0.3836 & 0.0696 & 0.0428 & 1.1989 & 0.2003 \\ \hline
\end{tabular}%
\bigskip
\end{center}

We obtain excellent agreement of the calibration with the true values, with
errors lower than $1\%$ after $5$ to $8$ steps. Other calibrations, not
reported here because of their similarity with these examples, show that
calibration accuracy would increase with more liquid options since these
allow being able to use intervals further to the left, ensuring a better
match with the asymptotic regime (\ref{E:onemore1}). The examples reported
above in full correspond to realistic liquidity assumptions.\bigskip

We now propose a calibration method to estimate the memory parameter in the
fOU volatility model. This model was introduced in \cite{CR} as a way to
model long-range dependence in non-linear functionals of asset returns,
while preserving the uncorrelated semi-martingale structure at the level of
returns themselves. This is the model for $X$ in (\ref{OU}) where the
process $Z$ is a fractional Brownian motion, i.e. the continuous Gaussian
process started at $0$ with covariance determined by $\mathbf{E}\left[
\left( Z_{t}-Z_{s}\right) ^{2}\right] =\left\vert t-s\right\vert ^{2H}$,
with \textquotedblleft Hurst\textquotedblright\ parameter $H\in (0.5,1)$. In 
\cite{CV}, it was shown empirically that standard statistical methods for
long-memory data are inadequate for estimating $H$. This difficulty can be
attributed to the fact that the volatility process $X$ can have
non-stationary increments. In addition, some of the classical methods use
path regularity or self-similarity as a proxy for long memory, which cannot
be exploited in practice since there is a lower limit to how frequenty
observations can be made without running into microstructure noise. To make
matters worse, the process $X$ is not directly observed; in such a partial
observation case, a general theoretical result was given in \cite{R}, by
which the rate of convergence of any estimator of $H$ cannot exceed an
optimal $H$-dependent rate which is always slower than $N^{-1/4}$, where $N$
is the number of observations. Given the non-stationarity of the parameter $%
H $ on a monthly scale, a realistic time series at the highest observation
frequency where microstructure noise can be ingored (e.g. one stock
observation every 5 minutes) would not permit even the optimal estimators
described in \cite{R} from pinning down a value of $H$ with any acceptable
confidence level. The work in \cite{CV} proposes a calibration technique
based on a straightforward comparison of simulated and market option prices
to determine $H$. \ Our strategy herein is similar, but based on implied
volatility.

Our goal is to calibrate the fOU model described above with the following
parameters: $T=1/4$, $m=0.2$, $q=7$, $\sigma =1.2$, with different values of
the Hurst parameter $H$, namely 
\begin{equation}
H\in \{0.51,0.55,0.60,0.65,0.70,0.75,0.80,0.85\}  \label{fixedH}
\end{equation}%
As mentioned above, our simulated option prices use standard Monte Carlo,
where the fOU process is produced by A.T. Dieker's circulant method. Since
the values of $\lambda _{1}$ for each $H>0.5$ are not known explicitly or
semi-explicitly, we resorted to the method developed in by S. Corlay in \cite%
{C} for optimal quantization: there, the infinite-dimensional eigenvalue
problem is converted to a matrix eigenvalue problem which uses a low-order
quadrature rule for approximating integrals (a trapezoidal rule is
recommended), after which a Richardson-Romberg extrapolation is used to
improve accuracy. We repeat this procedure for the fOU process with the
above parameters, for each value of $H$ from $0.50$ to $0.99$, with
increments of $0.01$. The corresponding values we obtain for $\lambda _{1}$
in each case are collected in the following table:\vspace{0.12in}

$\hspace*{-0.25in}%
\begin{array}{ccccccccccc}
\text{{\small \emph{H }=}} & \text{{\small 0.50}} & \text{{\small 0.51}} & 
\text{{\small 0.52}} & \text{{\small 0.53}} & \text{{\small 0.54}} & \text{%
{\small 0.55}} & \text{{\small 0.56}} & \text{{\small 0.57}} & \text{{\small %
0.58}} & \text{{\small 0.59}} \\ 
{\small \lambda }_{1}\text{=} & \text{{\small 0.0157}} & {\small 0.0155} & 
{\small 0.0152} & {\small 0.0150} & {\small 0.0148} & {\small 0.0146} & 
{\small 0.0144} & {\small 0.0142} & {\small 0.0140} & {\small 0.0138}%
\end{array}%
$

$\hspace*{-0.25in}%
\begin{array}{lllllllllll}
\text{{\small \emph{H }=}} & \text{{\small 0.60}} & \text{{\small 0.61}} & 
\text{{\small 0.62}} & \text{{\small 0.63}} & \text{{\small 0.64}} & \text{%
{\small 0.65}} & \text{{\small 0.66}} & \text{{\small 0.67}} & \text{{\small %
0.68}} & \text{{\small 0.69}} \\ 
{\small \lambda }_{1}\text{=} & {\small 0.0136} & {\small 0.0134} & {\small %
0.0132} & {\small 0.0130} & {\small 0.0128} & {\small 0.0126} & {\small %
0.0124} & {\small 0.0122} & {\small 0.0120} & {\small 0.0118}%
\end{array}%
$

$\hspace*{-0.25in}%
\begin{array}{lllllllllll}
\text{{\small \emph{H }=}} & \text{{\small 0.70}} & \text{{\small 0.71}} & 
\text{{\small 0.72}} & \text{{\small 0.73}} & \text{{\small 0.74}} & \text{%
{\small 0.75}} & \text{{\small 0.76}} & \text{{\small 0.77}} & \text{{\small %
0.78}} & \text{{\small 0.79}} \\ 
{\small \lambda }_{1}\text{=} & {\small 0.0116} & {\small 0.0115} & {\small %
0.0113} & {\small 0.0111} & {\small 0.0109} & {\small 0.0108} & {\small %
0.0106} & {\small 0.0104} & {\small 0.0103} & {\small 0.0101}%
\end{array}%
$

$\hspace*{-0.25in}%
\begin{array}{lllllllllll}
\text{{\small \emph{H }=}} & \text{{\small 0.80}} & \text{{\small 0.81}} & 
\text{{\small 0.82}} & \text{{\small 0.83}} & \text{{\small 0.84}} & \text{%
{\small 0.85}} & \text{{\small 0.86}} & \text{{\small 0.87}} & \text{{\small %
0.88}} & \text{{\small 0.89}} \\ 
{\small \lambda }_{1}\text{=} & {\small 0.0100} & {\small 0.0098} & {\small %
0.0097} & {\small 0.0095} & {\small 0.0094} & {\small 0.0092} & {\small %
0.0091} & {\small 0.0089} & {\small 0.0088} & {\small 0.0087}%
\end{array}%
$

$\hspace*{-0.25in}%
\begin{array}{lllllllllll}
\text{{\small \emph{H }=}} & \text{{\small 0.90}} & \text{{\small 0.91}} & 
\text{{\small 0.92}} & \text{{\small 0.93}} & \text{{\small 0.94}} & \text{%
{\small 0.95}} & \text{{\small 0.96}} & \text{{\small 0.97}} & \text{{\small %
0.98}} & \text{{\small 0.99}} \\ 
{\small \lambda }_{1}\text{=} & {\small 0.0085} & {\small 0.0084} & {\small %
0.0083} & {\small 0.0082} & {\small 0.0080} & {\small 0.0079} & {\small %
0.0078} & {\small 0.0077} & {\small 0.0076} & {\small 0.0075}%
\end{array}%
$

$~$

Our illustration of the calibration method then consists of starting with
simulated IV data for a fOU model with a fixed $H$ from the set in (\ref%
{fixedH}), then, similarly to what we did for the Stein-Stein model,
calibrate the value of $\lambda _{1}$ to the first term of the simulated IV
curve over an interval of length $0.1$. For our choice of $T=1/4$ we use $%
k\in \lbrack -1.0,-0.9]$ to determine $\lambda _{1}$, which is realistic in
terms of liquidity constraints. We then match that value of $\lambda _{1}$
to the closest value in the above table, thereby concluding that the
simulated data is consistent with the corresponding value of $H$ in the
table. Because of the instability in determining $M_{4}$ in (\ref{E:onemore1}%
) by curve fitting, as noted for the standard Stein-Stein model, rather than
using the iterative technique described above, we fit our simulated data
curve to the first two terms in this expansion only, resulting in a robust
estimate for $M_{1}$ in all cases, from which our calibrated $\lambda _{1}$
results via (\ref{calibration}). This is more efficient since we are only
calibrating the single parameter $H$. The results of this method are
summarized here.$\vspace{0.12in}$

\begin{center}
\begin{tabular}{|l|l|}
\hline
$T=1/4$ (3 mos.) & Calibration of $H$ via $\lambda _{1}$ over the interval $%
[-1.0,-0.9]$ \\ \hline
\end{tabular}

\begin{tabular}{|l|l|l|l|l|l|l|l|l|}
\hline
True $H$ & $0.51$ & $0.55$ & $0.60$ & $0.65$ & $0.70$ & $0.75$ & $0.80$ & $%
0.85$ \\ \hline
True $\lambda _{1}$ & $0.0155$ & $0.0146$ & $0.0136$ & $0.0126$ & $0.0116$ & 
$0.0108$ & $0.0100$ & $0.00923$ \\ \hline
calibrated $\lambda _{1}$ & $0.0152$ & $0.0147$ & $0.0134$ & $0.0127$ & $%
0.0115$ & $0.0109$ & $0.0101$ & $0.00937$ \\ \hline
calibrated $H$ & $0.52$ & $0.55$ & $0.61$ & $0.64$ & $0.71$ & $0.74$ & $0.79$
& $0.84$ \\ \hline
\end{tabular}%
$\vspace{0.12in}$
\end{center}

Our method shows a good level of accuracy. One notes a bias between the
curve $M_{1}\sqrt{-k}+M_{2}$ and the simulated IV data, as illustrated in
Figures 2a to 2h, which appears to shift downward as $H$ increases. Since we
are only calibrating $H$ via $\lambda _{1}$ which is inferred from $M_{1}$,
this bias has no influence on the calibration. At the cost of computing $%
M_{4}$ as we did for the Stein-Stein model, which would be more onerous in
the fOU case because one would need to push Corlay's method much further for
estimating KL eigenelements, we could obtain the 3-term expansion in (\ref%
{E:onemore1}), resulting in curves which would have much less of a bias than
in Figures 2a to 2h, but this would not improve the calibration of $\lambda
_{1}$ and $H$.



\bigskip

\begin{figure}[h!]
\noindent 
\includegraphics[scale=0.45]{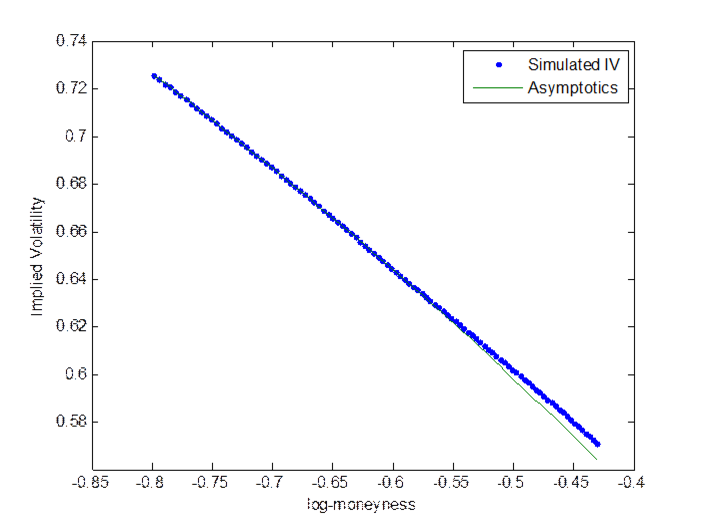}
\caption{Figure 1a. One-month IV for
Stein-Stein model with parameters $m=0.2$, $q=7$ , $\protect\sigma =1.2$}
\end{figure}

\begin{figure}[h!]
\noindent 
\includegraphics[scale=0.45]{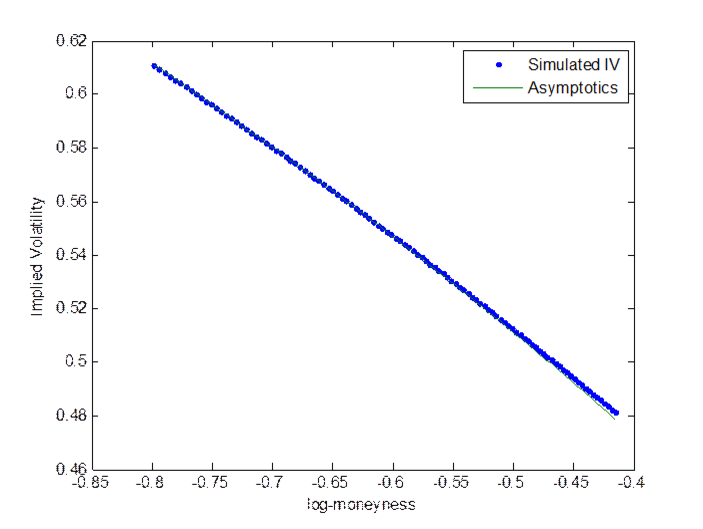}
\caption{Figure 1b. Two-month IV for Stein-Stein model with parameters $m=0.2$, $q=7$ , $\protect\sigma =1.2$}
\end{figure}

\begin{figure}[h!]
\noindent 
\includegraphics[scale=0.45]{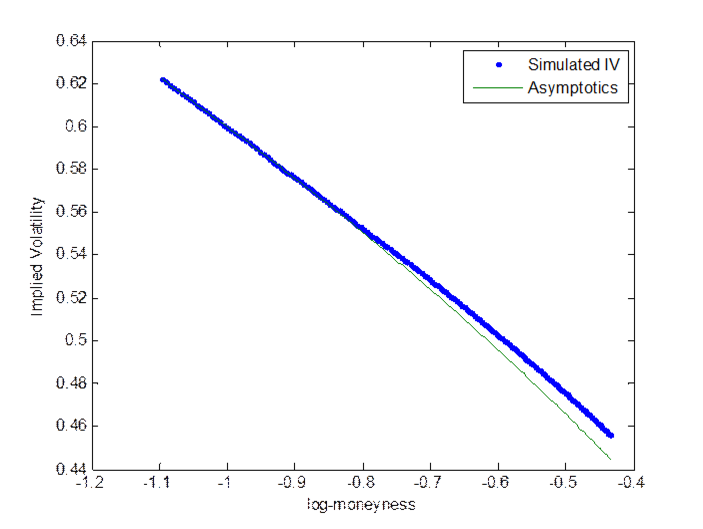}
\caption{Figure 1c. Three-month IV for
Stein-Stein model with parameters $m=0.2$, $q=7$ , $\protect\sigma =1.2$}
\end{figure}

\begin{figure}[h!]
\noindent 
\includegraphics[scale=0.45]{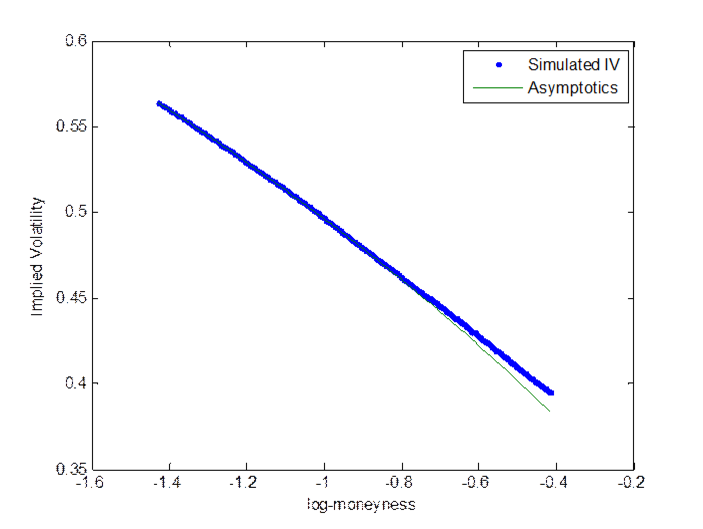}
\caption{Figure 1d. Six-month IV for Stein-Stein model with parameters $m=0.2$, $q=7$ , $\protect\sigma =1.2$}
\end{figure}

\begin{figure}[h!]
\noindent 
\includegraphics[scale=0.35]{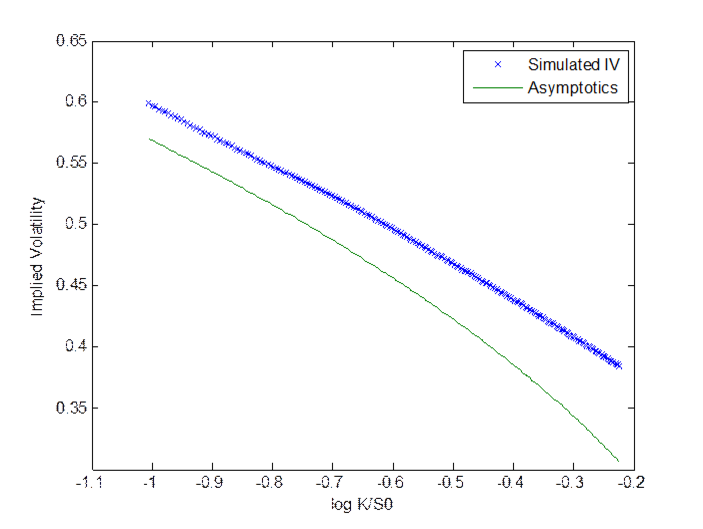}
\includegraphics[scale=0.35]{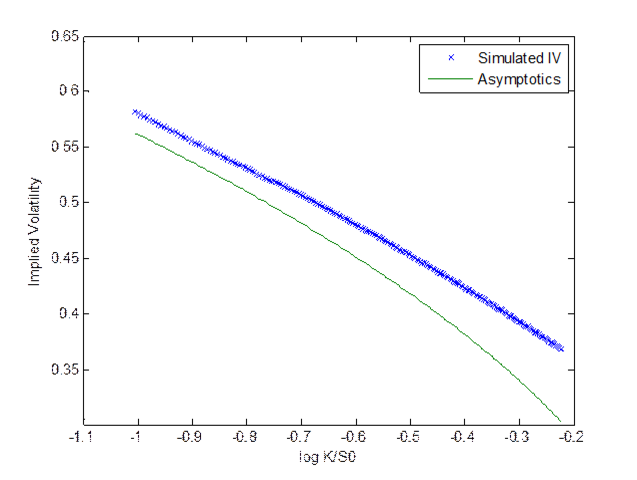}
\includegraphics[scale=0.35]{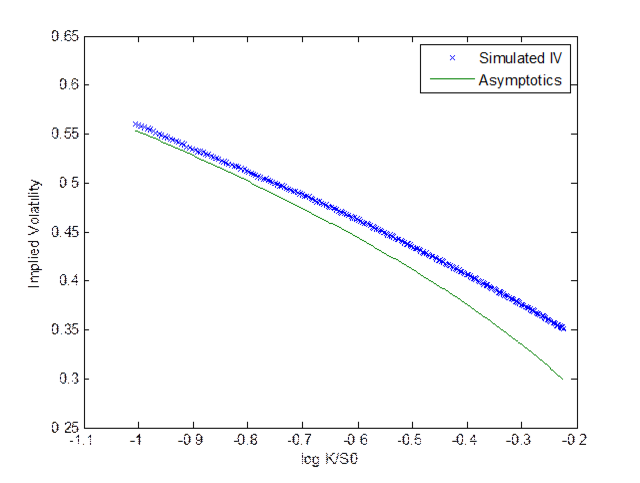}
\includegraphics[scale=0.35]{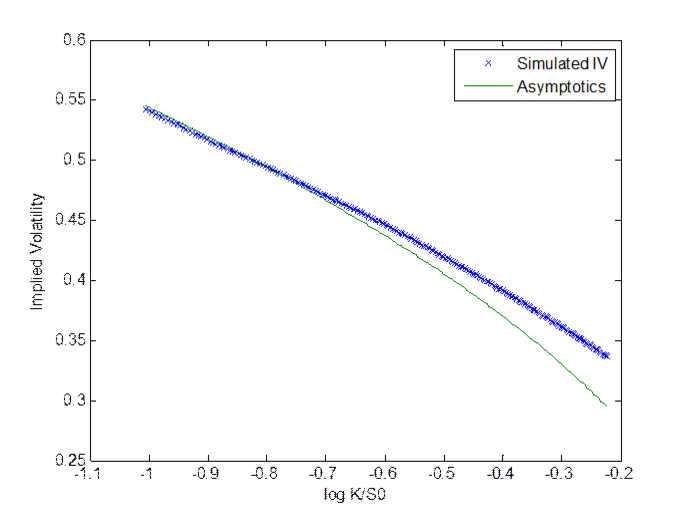}
\caption{Figures 2a, 2b, 2c, 2d. IV for fOU model
with $H=0.51$, $H=0.55$, $H=0.60$, $H=0.65$}
\end{figure}

\begin{figure}[h!]
\noindent 
\includegraphics[scale=0.34]{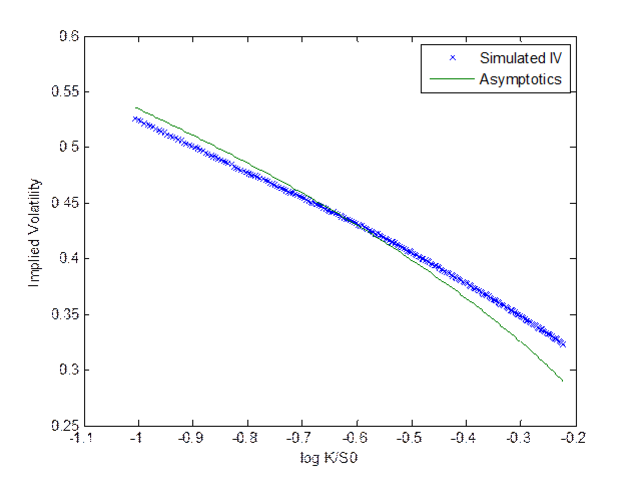}
\includegraphics[scale=0.31]{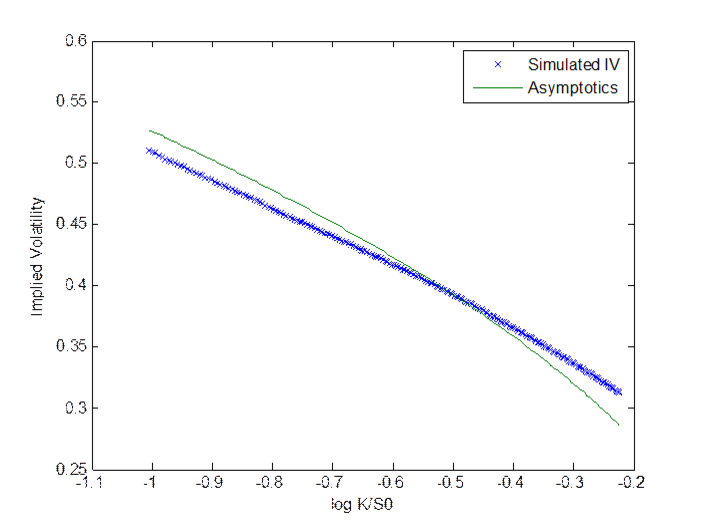}
\includegraphics[scale=0.33]{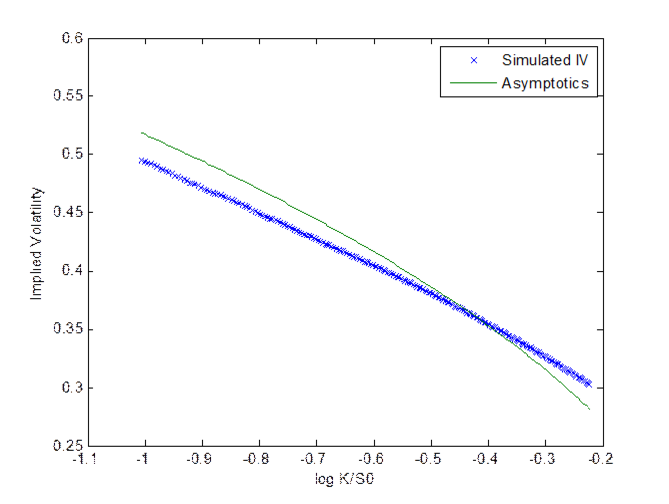}
\includegraphics[scale=0.32]{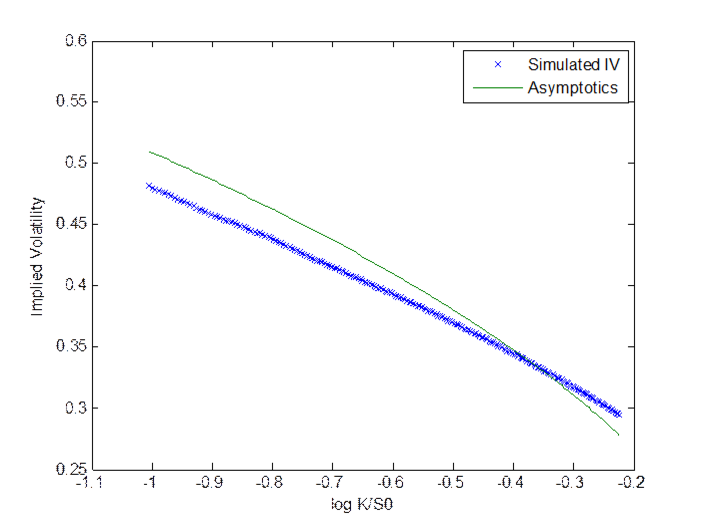}
\caption{Figures 2e, 2f, 2g, 2h. IV for fOU model
with $H=0.70$ $H=0.75$, $H=0.80$, $H=0.85$}
\end{figure}

\end{document}